\def\numcolumns{1}
\else \typeout{*** Error: number of columns should be 1 or 2 ***}
\theoremstyle{plain}
\newtheorem{lemma}{Lemma}
\newtheorem{theorem}{Theorem}
\newtheorem{definition}{Definition}
\newtheorem{conjecture}{Conjecture}
\newcommand{\rem}[1]{}
\newcommand{\figref}[1]{Fig.~\ref{fig:#1}}
\newcommand{\thrmref}[1]{Theorem~\ref{thm:#1}}
\newcommand{\lemref}[1]{Lemma~\ref{lem:#1}}
\long\def\symbolfootnote[#1]#2{\begingroup%
\def\thefootnote{\fnsymbol{footnote}}\footnote[#1]{#2}\endgroup}
\newcommand{\beq}{\begin{equation}}
\newcommand{\eeq}{\end{equation}}
\newfont{\boldlarge}{msbm10 scaled 1100}
\newlength{\myVSpace}
\begin{document}

\title{\bf The Rate Loss of Single-Letter Characterization: \\
The ``Dirty'' Multiple Access Channel}



\author{
\large Tal Philosof and Ram Zamir \symbolfootnote[2]{This research
was partially supported by BSF grant No-2004398}
\\
\normalsize Dept. of Electrical Engineering - Systems, Tel-Aviv
University\\
\normalsize Tel-Aviv 69978, ISRAEL\\
{\em talp,zamir@eng.tau.ac.il}\\
\normalsize \textit{Submitted to IEEE Trans. on Information Theory March 2008}\\
 }

\maketitle

\begin{abstract}
For general memoryless systems, the typical information theoretic
solution - when exists - has a ``single-letter'' form. This
reflects the fact that optimum performance can be approached by a
random code (or a random binning scheme), generated using
independent and identically distributed copies of some
single-letter distribution. Is that the form of the solution of
any (information theoretic) problem? In fact, some counter
examples are known. The most famous is the ``two help one''
problem: Korner and Marton showed that if we want to decode the
modulo-two sum of two binary sources from their independent
encodings, then linear coding is better than random coding.
In this paper we provide another counter example, the
``doubly-dirty'' multiple access channel (MAC). Like the
Korner-Marton problem, this is a multi-terminal scenario where
side information is distributed among several terminals; each
transmitter knows part of the channel interference but the
receiver is not aware of any part of it. We give an explicit
solution for the capacity region of a binary version of the
doubly-dirty MAC, demonstrate how the capacity region can be
approached using a linear coding scheme, and prove that the ``best
known single-letter region'' is strictly contained in it. We also
state a conjecture regarding a similar rate loss of single letter
characterization in the Gaussian case.
\end{abstract}

\begin{keywords}
Multi-user information theory, random binning, linear lattice
binning, dirty paper coding, lattice strategies, Korner-Marton
problem.
\end{keywords}

\section{Introduction}\label{sec:intro}

Consider the two-user / double-state memoryless multiple access
channel (MAC) with transition and state probability distributions
\begin{eqnarray}
P(y|x_1,x_2,s_1,s_2) \ \  \mbox{and} \ \ P(s_1,s_2),
\label{eq:GeneralModel}
\end{eqnarray}
respectively,
where the states $S_1$ and $S_2$ are known non-causally to user
$1$ and user $2$, respectively. A special case of
(\ref{eq:GeneralModel}) is the additive channel shown in
\figref{DoublyDirtyMACModel}. In this channel, called the {\em
doubly-dirty MAC} (after Costa's ``writing on dirty paper''
\cite{Costa83}), the total channel noise consists of three
independent components: $S_1$ and $S_2$, the interference signals,
that are known to user $1$ and user $2$, respectively, and $Z$,
the unknown noise, which is known to neither. The channel inputs
$X_1$ and $X_2$ may be subject to some average cost constraint.

Neither the capacity region of \eqref{eq:GeneralModel} nor that of
the special case of \figref{DoublyDirtyMACModel} are known. In
this paper we consider a particular binary version of the
doubly-dirty MAC of \figref{DoublyDirtyMACModel}, where all
variables are in $\mathbb{Z}_2$, i.e., $\{0,1\}$, and the unknown
noise $Z=0$. The channel output of the binary doubly-dirty MAC is
given by
\begin{eqnarray}
Y=X_1\oplus X_2\oplus S_1\oplus S_2\label{eq:BinaryModel},
\end{eqnarray}
where $\oplus$ denotes the $\mbox{mod}\;2$ addition (xor), and
$S_1,S_2$ are $\mbox{Bernoulli}(1/2)$ and independent. Each of the
codewords $\mathbf{x}_i\in\mathbb{Z}_2^n$  is a function of the
message $W_i$ and the interference vector
$\mathbf{s}_i\in\mathbb{Z}_2^n$, and must satisfy the input
constraint, $\frac{1}{n}w_H(\mathbf{x}_i)\leq q_i$, $\;i=1,2$,
where $0 \leq q_1,q_2\leq 1/2$ and $w_H(\cdot)$ is the Hamming
weight. The coding rates $R_1$ and $R_2$ of the two users are
given as usual by $R_i = \frac{1}{n} \log| {\cal W}_i |$, where
${\cal W}_i$ is the set of messages of user $i$, and $n$ is the
length of the codeword.
\begin{figure}[t]
  \begin{center}
    \input{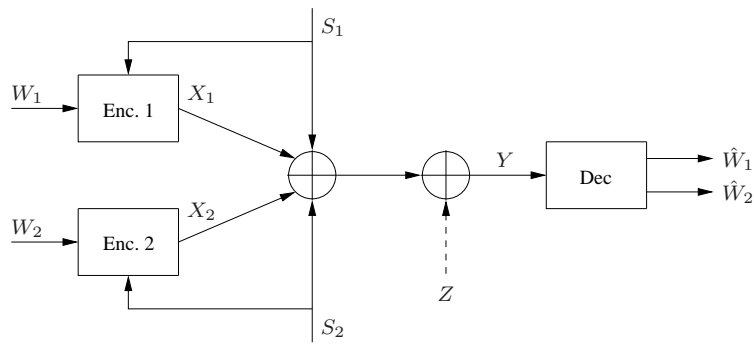}
    \caption{Doubly-dirty MAC}\label{fig:DoublyDirtyMACModel}
  \end{center}
\label{fig1}
\end{figure}

The double state MAC \eqref{eq:GeneralModel} generalizes the point
to point channel with side information (SI) at the transmitter
considered by Gel'fand and Pinsker \cite{GelfandPinsker80}. They
prove their direct coding theorem using the framework of random
binning, which is widely used in the analysis of multi-terminal
source and channel coding problems \cite{CoverBook}. They obtain a
general capacity expression which involves an auxiliary random
variable $U$:
\begin{eqnarray}\label{eq:GelfandPinskerCapacity}
C=\max_{P(u,x|s)}\left\{H(U|S)-H(U|Y)\right\}
\end{eqnarray}
where the maximization is over all the joint distributions of the
form $p(u,s,y,x)=p(s)p(u,x|s)p(y|x,s)$.

The channel in \eqref{eq:GeneralModel} with only one informed
encoder (i.e., where $S_2=\{\emptyset\}$) was considered recently
by Somekh-Baruch et al. \cite{BaruchShamaiVerdu06} and Kotagiri
and Laneman \cite{KotagiriLaneman06}. The common message
($W_1=W_2$) capacity of this channel is known
\cite{BaruchShamaiVerdu06}, and it involves using random binning
by the informed user. For the binary ``one dirty user'' case
(i.e., (\ref{eq:BinaryModel}) with $S_2=0$), we show that
Somekh-Baruch's common-message capacity becomes (see
Appendix~\ref{App:Appendix1})
\begin{align}\label{eq:SingleInterferenceOuterBound}
C_{com}=H_b(q_1),
\end{align}
where $H_b(x)\triangleq -x\log_2(x)-(1-x)\log_2(1-x)$ is the
binary entropy function. Clearly, the doubly-dirty
individual-message case is harder. Thus, it follows from
\eqref{eq:SingleInterferenceOuterBound} that the rate-sum in the
setting of \figref{DoublyDirtyMACModel} is upper bounded by
\begin{align}
&R_1+R_2 \leq
\min\Big\{H_b(q_1),H_b(q_2)\Big\}.\label{eq:DoublyInterferenceOuterBound}
\end{align}
In \thrmref{Capacity} we show that this upper bound is in fact
tight.

One approach to find {\em achievable rates} for the doubly-dirty MAC, is
to extend the Gel'fand and Pinsker solution
\cite{GelfandPinsker80} to the two-user / double-state case. As
shown by Jafar \cite{Jafar06}, this extension leads to the
following pentagonal inner bound for the capacity region of
\eqref{eq:GeneralModel}:
\begin{align}
\mathcal{R}(U_1,U_2)\triangleq\Bigg\{(R_1,R_2):\;
&R_1\leq I(U_1,Y|U_2)-I(U_1;S_1)\nonumber\\
&R_2\leq I(U_2,Y|U_1)-I(U_2;S_2)\label{eq:SingleLetterRegion0}\\
&R_1+R_2\leq I(U_1,U_2,Y)-I(U_1;S_1)-I(U_2;S_2)\Bigg\}\nonumber
\end{align}
for some
$P(U_1,U_2,X_1,X_2|S_1,S_2)=P(U_1,X_1|S_1)P(U_2,X_2|S_2)$. In
fact, by a standard time-sharing argument \cite{CoverBook}, the
closure of the convex hull of the set of all rate pairs
$(R_1,R_2)$ satisfying \eqref{eq:SingleLetterRegion},
\begin{align}
\mathcal{R}_{BSL}\triangleq cl\;conv\;
\Bigg\{&(R_1,R_2)\in\mathcal{R}(U_1,U_2):
P(U_1,U_2,X_1,X_2|S_1,S_2)=P(U_1,X_1|S_1)P(U_2,X_2|S_2)\Bigg\},\label{eq:SingleLetterRegion}
\end{align}
is also achievable\footnote{ As in the Gel'fand and Pinsker
solution, for a finite alphabet system it is enough to optimize
over auxiliary variables $U_1$ and $U_2$ whose alphabet size is
bounded in terms of the size of the input and state alphabets. }.
To the best of our knowledge, the set $\mathcal{R}_{BSL}$ is the
best currently known single-letter characterization for the rate
region of the MAC with side information at the transmitters
(\ref{eq:GeneralModel}), and in particular, for the doubly-dirty
MAC \eqref{eq:BinaryModel}\footnote{For the case where the users
have also a common message $W_0$ to be transmitted jointly by both
encoders, \eqref{eq:SingleLetterRegion} can be improved by adding
another auxiliary random variable $U_0$ which plays the role of
the common auxiliary r.v. in Marton's inner bound  for the
non-degraded broadcast channel \cite{Marton79}. In this case, the
joint distribution of $(U_0,U_1,U_2)$ is given by
$P(U_0,U_1,U_2)=P(U_0)P(U_1|U_0)P(U_2|U_0)$, i.e, $U_1$ and $U_2$
are conditionally independent given $U_0$.}. The achievability of
\eqref{eq:SingleLetterRegion} can be proved, as usual, by an i.i.d
random binning scheme \cite{Jafar06}.

A different method to cancel known interference is by ``linear
strategies'', i.e, binning based on the cosets of a linear code
\cite{ErezShamaiZamir05,ZamirShamaiEreznested,PhilosofKhistiErezZamir07}.
In the sequel, we show that the outer bound
\eqref{eq:DoublyInterferenceOuterBound} can indeed be achieved by
a linear coding scheme. Hence, the set of rate pairs $(R_1,R_2)$
satisfying \eqref{eq:DoublyInterferenceOuterBound} is the capacity
region of the binary doubly-dirty MAC.
In contrast, we show that the single-letter region
\eqref{eq:SingleLetterRegion} is \emph{strictly contained} in this
capacity region. Hence, a random binning scheme based on this
extension of the Gel'fand-Pinsker solution \cite{GelfandPinsker80} is not
optimal for this problem.

A similar observation has been made by Korner-Marton
\cite{KornerMarton79} for the ``two help one'' source coding
problem. For a specific binary version known as the ``modulo-two
sum'' problem, they showed that the minimum possible rate sum is
achieved by a linear coding scheme, while the best known
single-letter expression  for this problem is strictly higher. See
the discussion in \cite[Section IV]{KornerMarton79} and in the end
of Section~\ref{sec:SingleLetter}.

Although the ``\textit{single-letter characterization}'' is a
fundamental concept in information theory, it has not been
generally defined \cite[p.35]{GopinathBook}. Csiszar and Korner
\cite[p.259]{CsiszarBook} suggested to define it through the
notion of \textit{computability}, i.e., a problem has a
single-letter solution if there exists an algorithm which can
decide if a point belongs to an $\varepsilon$-neighborhood of the
achievable rate region with polynomial complexity in
$1/\varepsilon$. Since we are not aware of any other computable
solution to our problem, we shall refer to
\eqref{eq:SingleLetterRegion} as the \textit{``best known
single-letter characterization''}.

An extension of these observations to continuous channels would be
of interest. Costa \cite{Costa83} considered the single-user case
of the dirty channel problem $Y=X+S+Z$, where the interference $S$
and the noise $Z$ are assumed to be i.i.d. Gaussian with variances
$Q$ and $N$, respectively, and the input $X$ is subject to a power
constraint $P$. He showed that in this case, the transmitter
side-information capacity \eqref{eq:GelfandPinskerCapacity}
coincides with the zero-interference capacity
$\frac{1}{2}\log_2(1+SNR)$, where $SNR=P/N$. Selecting the
auxiliary random variable $U$ in \eqref{eq:GelfandPinskerCapacity}
such that
\begin{eqnarray}
U=X+\alpha S,\label{eq:CostaRandomBinning}
\end{eqnarray}
where $X$ and $S$ are independent,
and taking $\alpha=\frac{P}{P+N}$, the formula
\eqref{eq:GelfandPinskerCapacity} and its associated random
binning scheme are capacity achieving.
The continuous (Gaussian) version of the doubly-dirty MAC of
\figref{DoublyDirtyMACModel} was considered in
\cite{PhilosofKhistiErezZamir07}. It was shown that by using a
linear structure, i.e., lattice strategies
\cite{ErezShamaiZamir05}, the full capacity region is achieved in
the limit of high SNR and high lattice dimension.
In contrast, it was shown that for $Q \rightarrow \infty$
no positive rate is achievable
by using the natural generalization of Costa's
strategy \eqref{eq:CostaRandomBinning} to the two user case,
while a (scalar) modulo addition version of
\eqref{eq:CostaRandomBinning} looses $\thickapprox 0.254$ bit in
the sum capacity.
We shall further elaborate on this issue in
Section~\ref{sec:GaussianCase}.

Similar observations regarding the advantage of modulo-lattice
modulation with respect to a separation based solution were made
by Nazer and Gastpar \cite{NazerGastpar07}, in the context of
computation over linear Gaussian networks, and also by Krithivasan
and Pradhan \cite{KrithivasanPradhan07} for multi-terminal rate
distortion problems.

The paper is organized as follows. In
Section~\ref{sec:BinaryCapacityRegion} the capacity region for the
binary doubly-dirty MAC \eqref{eq:BinaryModel} is derived, and
linear coding is shown to be optimal.
Section~\ref{sec:SingleLetter} develops a closed form expression
for the best known single-letter characterization
\eqref{eq:SingleLetterRegion} for this channel, and demonstrates
that it is strictly contained in the the true capacity region. In
Section~\ref{sec:GaussianCase} we consider the Gaussian
doubly-dirty MAC, and state a conjecture regarding the capacity
loss of single-letter characterization in this case.


\section{The Capacity Region of the Binary Doubly-Dirty MAC}\label{sec:BinaryCapacityRegion}
The following theorem characterizes the capacity region of the
binary doubly-dirty MAC of \figref{DoublyDirtyMACModel}.
\begin{theorem}
\label{thm:Capacity}
The capacity region of the binary doubly-dirty MAC
(\ref{eq:BinaryModel}) is the set of all rate pairs $(R_1,R_2)$
satisfying
\begin{align}
\mathcal{C}(q_1,q_2)\triangleq \Bigg\{(R_1,R_2):R_1+R_2\leq
\min\Big\{H_b(q_1),H_b(q_2)\Big\}\Bigg\} .
\label{eq:Capacity}
\end{align}
\end{theorem}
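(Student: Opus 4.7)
The plan. The converse inequality $R_1+R_2 \le \min\{H_b(q_1),H_b(q_2)\}$ is already in hand, having been derived in \eqref{eq:DoublyInterferenceOuterBound} by reduction to the single-informed-user common-message capacity \eqref{eq:SingleInterferenceOuterBound}. So the remaining task is achievability, and my strategy will be to give an explicit \emph{linear} (coset) coding scheme whose algebraic structure lets the receiver cancel $S_1 \oplus S_2$ modulo a linear subspace of $\mathbb{Z}_2^n$ without ever learning either interference vector individually.

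First I would fix any rate pair with $R_1+R_2 < \min\{H_b(q_1),H_b(q_2)\}$, set $q_{\min} = \min\{q_1,q_2\}$, and choose a binary linear code $C \subseteq \mathbb{Z}_2^n$ of rate $r = 1 - R_1 - R_2 - \epsilon$ whose covering radius satisfies $\rho(C) \le q_{\min} n$. The existence of such a code for all sufficiently large $n$ follows from the classical random-coding bound for linear covering codes, because $H_b^{-1}(1-r) < q_{\min}$ by our choice of $\epsilon$. Pick any splitting of the quotient $\mathbb{Z}_2^n/C = V_1 \oplus V_2$ with $|V_i| \ge 2^{nR_i}$, and fix, for each user $i$, an injective map $W_i \mapsto v_i(W_i) \in \mathbb{Z}_2^n$ whose image is a system of coset representatives for $V_i$. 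User $i$ transmits
\begin{equation*}
\mathbf{x}_i = v_i(W_i) \oplus \mathbf{s}_i \oplus \mathbf{c}_i^*, \qquad \mathbf{c}_i^* = \arg\min_{\mathbf{c} \in C} w_H\!\left( v_i(W_i) \oplus \mathbf{s}_i \oplus \mathbf{c} \right),
\end{equation*}
so that $w_H(\mathbf{x}_i) \le \rho(C) \le q_i n$ by the covering property, satisfying the input cost constraint.

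The received word
$\mathbf{y} = \mathbf{x}_1 \oplus \mathbf{x}_2 \oplus \mathbf{s}_1 \oplus \mathbf{s}_2 = v_1(W_1) \oplus v_2(W_2) \oplus \mathbf{c}_1^* \oplus \mathbf{c}_2^*$
lies in the single coset $v_1(W_1)+v_2(W_2)+C$, so the decoder recovers the quotient element $v_1(W_1)+v_2(W_2) \in V_1 \oplus V_2$ with \emph{zero} error by applying the syndrome (parity-check) map of $C$, and the direct-sum decomposition then reads off the pair $(W_1,W_2)$. The main obstacle — essentially the only non-algebraic step — is producing the good linear covering code; I would handle it by the standard first-moment argument, noting that for any fixed $y \in \mathbb{Z}_2^n$, the expected number of codewords of a random rate-$r$ linear code within Hamming distance $q_{\min} n$ of $y$ grows exponentially in $n$ (because $H_b(q_{\min}) > 1-r$), so that a concentration plus union bound over the $2^n$ choices of $y$ gives simultaneous coverage with probability approaching one and thus the required deterministic code by expurgation.
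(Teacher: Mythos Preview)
Your proposal is correct and takes essentially the same linear coset-coding approach as the paper: both encode by reducing $v_i(W_i)\oplus\mathbf{s}_i$ modulo a good linear covering code, both decode by applying the syndrome map, and both invoke the existence of binary linear codes with normalized covering radius approaching $H_b^{-1}(1-r)$. The only cosmetic difference is that the paper first establishes the corner point $(H_b(q),0)$ and time-shares, then separately gives the direct-sum-of-syndromes construction for arbitrary $(R_1,R_2)$, whereas you go straight to the latter; your splitting $\mathbb{Z}_2^n/C=V_1\oplus V_2$ is exactly the paper's zero-padded syndrome trick $\mathbf{v}_1=[\mathbf{m}_1\,0\cdots 0]$, $\mathbf{v}_2=[0\cdots 0\,\mathbf{m}_2]$ in abstract form.
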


\begin{proof}
\textbf{\textit{The converse part:}} As explained in the
Introduction \eqref{eq:DoublyInterferenceOuterBound}, one way to
derive an upper bound for the rate-sum is through the general
one-dirty-user capacity formula \cite{BaruchShamaiVerdu06}, which
we derive explicitly for the binary case in
Appendix~\ref{App:Appendix1}.
Here we show directly the converse part, which is similar to the
proof of the outer bound for the Gaussian case in
\cite{KhistiPr,PhilosofKhistiErezZamir07}. We assume that user $1$
and user $2$ intend to transmit a common message $W$. An upper
bound on the rate of this message clearly upper bounds the sum
rate $R_1 + R_2$ in the individual messages case. Thus,
\begin{align}
n(R_1+R_2) &\le H(W)\nonumber\\
&= H(W|Y^n) + I(W;Y^n)\nonumber\\
&\le I(W;Y^n) + n\epsilon_n\label{eq:200}\\
&= H(Y^n) - H(Y^n|W) + n\epsilon_n \nonumber\\
&= H(Y^n) - H(Y^n|W, S_1^n, S_2^n) - I(S_1^n, S_2^n; Y^n | W) + n\epsilon_n \nonumber\\
&= H(Y^n) - I(S_1^n, S_2^n; Y^n | W) + n\epsilon_n\label{eq:205}\\
&= H(Y^n) - H(S_1^n, S_2^n|W) + H(S_1^n, S_2^n|W, Y^n) + n\epsilon_n\nonumber\\
&\le -n+ H(S_1^n|W, Y^n) +  H(S_2^n|W, Y^n, S_1^n) + n\epsilon_n \label{eq:210}\\
&\le H(X_1^n \oplus X_2^n \oplus S_1^n|W, Y^n, S_1^n) + n\epsilon_n\label{eq:220}\\
&= H(X_2^n|W, Y^n, S_1^n) + n\epsilon_n\label{eq:225}\\
&\le nH_b(q_2) + n\epsilon_n,\label{eq:230}
\end{align}
where \eqref{eq:200} follows from Fano's inequality where
$\epsilon_n\rightarrow 0$ as the error probability $P_e^{(n)}$
goes to zero for $n\rightarrow\infty$; \eqref{eq:205} follows
since $Y$ is fully known given $W$, $S_1$ and $S_2$;
\eqref{eq:210} follows from the chain rule for entropy, and due to
$H(Y^n) \leq n$ and $H(S_1^n, S_2^n|W)=H(S_1^n)+H(S_2^n)=2n$ since
$W$, $S_1^n$ and $S_2^n$ are mutually independent; \eqref{eq:220}
follows since $H(S_1^n|W, Y^n)\leq n$ and $Y^n=X_1^n \oplus X_2^n
\oplus S_1^n \oplus S_2^n$; \eqref{eq:225} follows since $X_1^n$
is a function of $(W,S_1^n)$, finally \eqref{eq:230} follows since
$H(X_2^n|W, Y^n, S_1^n)\leq H(X_2^n)\leq nH_b(q_2)$.

In the same way we can show that $R_1+R_2\leq
H_b(q_1)+\epsilon_n$. The converse part follows since for
$n\rightarrow\infty$ we have that $\epsilon_n\rightarrow 0$, thus
$P_e^{(n)}\rightarrow 0$.

\textbf{\textit{The direct part}} is based on the scheme for the
point-to-point binary dirty paper channel
\cite{ZamirShamaiEreznested}. We define
$q\triangleq\min\{q_1,q_2\}$. In view of the converse part, it is
sufficient to show achievability of the point
$(R_1,R_2)=(H_b(q),0)$, since the outer bound may be achieved by
time sharing with the symmetric point $(R_1,R_2)=(0,H_b(q))$. The
corner point $(R_1,R_2)=(H_b(q),0)$ corresponds to the ``helper
problem'', i.e., user $2$ tries to help user $1$ to transmit at
its highest rate. The encoders and decoder are described using a
binary linear code $\mathcal{C}(n,k)$ with parity check matrix
$H$. Let $\mathbf{v}\in\mathbb{Z}_2^{n-k}$ be a syndrome of the
code $\mathcal{C}$, where we note that each syndrome represents a
different coset of the linear code $\mathcal{C}$. Let
$f(\mathbf{v})$ denote the ``leader'' of (or the minimum weight
vector in) the coset associated with the syndrome $\mathbf{v}$
\cite[Chap. 6]{Gallager68}, hence
$f:\{0,1\}^{n-k}\rightarrow\{0,1\}^{n}$ . For
$\mathbf{a}\in\mathbb{Z}_2^n$, we define the $n$-dimensional
modulo operation over the code $\mathcal{C}$ as
\begin{align*}
\mathbf{a}\;\mbox{mod}\;\mathbb{C}\triangleq f(H\mathbf{a}),
\end{align*}
which is the leader of the coset to which the vector $\mathbf{a}$
belongs.
\begin{itemize}
\item \textbf{Encoder of user $\mathbf{1}$:} Let the transmitted
message $\mathbf{v}_1\in \mathbb{Z}_2^{n-k}$ be a syndrome in
$\mathcal{C}$, and let $\mathbf{\tilde{x}}_1=f(\mathbf{v}_1)$ be
its coset leader. In particular $\mathbf{v}_1 =
H\mathbf{\tilde{x}}_1$. Transmit the modulo of the code
$\mathcal{C}$ with respect to the difference between
$\tilde{\mathbf{x}}_1$ and $\mathbf{s}_1$, i.e.,
\begin{align*}
\mathbf{x}_1=(\tilde{\mathbf{x}}_1 \oplus
\mathbf{s}_1)\;\mbox{mod}\;\mathbb{C}=f(\mathbf{v}_1\oplus
H\mathbf{s}_1).
\end{align*}

\item \textbf{Encoder of user $\mathbf{2}$:} (functions as a
``helper'' for user $1$). Transmit
\begin{align*}
\mathbf{x}_2=\mathbf{s}_2\;\mbox{mod}\;\mathbb{C}=f(H\mathbf{s}_2).
\end{align*}

\item \textbf{Decoder:}\\
1. Reconstruct $\tilde{\mathbf{x}}_1$ by
$\hat{\tilde{\mathbf{x}}}_1= \mathbf{y}\;\rm{mod}\;\mathbb{C}$.\\
2. Reconstruct the transmitted coset of user $1$ by
$\hat{\mathbf{v}}_1=H\hat{\tilde{\mathbf{x}}}_1$.\\
In fact, the transmitted coset can be reconstructed directly as
$\hat{\mathbf{v}}_1=H\hat{\tilde{\mathbf{x}}}_1=H(\mathbf{y}\;\rm{mod}\;\mathbb{C})=H\mathbf{y}$,
where the last equality follows since
$\mathbf{y}\;\rm{mod}\;\mathbb{C}$ and $\mathbf{y}$ are in the
same coset.

\end{itemize}
It follows that the decoder correctly decodes the message coset
$\mathbf{v}_1$, since
\begin{align*}
\hat{\mathbf{v}}_1 &= H\cdot\Big(\mathbf{y}\;\rm{mod}\;\mathbb{C}\Big)\\
&= H\cdot\Big([\tilde{\mathbf{x}}_1 \oplus
\mathbf{s}_1 \oplus \mathbf{s}_2 \oplus \mathbf{s}_1 \oplus \mathbf{s}_2]\;\mbox{mod}\;\mathbb{C}\Big)\\
&= H\tilde{\mathbf{x}}_1\\
&=\mathbf{v}_1,
\end{align*}
where the third equality follows since $\tilde{\mathbf{x}}_1$ and
$\tilde{\mathbf{x}}_1\;\rm{mod}\;\mathbb{C}$ are in the same
coset. It is left to relate the coding rate
$R_1=\frac{1}{n}\log\Big(\Big|\{0,1\}^{n-k}\Big|\Big)=1-k/n$ to
the input constraint $q$. Form \cite{CoveringBook}, there exists a
binary linear code with covering radius $\rho$ that satisfies
$\frac{k}{n}\leq 1-H_b(\rho/n)+\epsilon$ where
$\epsilon\rightarrow 0$ as $n\rightarrow\infty$. The achievability
of the point $(H_b(q),0)$ follows by using $q=\rho/n$, thus
$R_1=1-k/n\geq H_b(q)-\epsilon$, while
$w_H(\mathbf{x}_1)=w_H(f(\mathbf{v}_1\oplus H\mathbf{s}_1))\leq
\rho$ and $w_H(\mathbf{x}_2)=w_H(f(H\mathbf{s}_2))\leq \rho$,
hence
\begin{align*}
&\frac{1}{n}Ew_H\{\mathbf{x}_1\}=\frac{1}{n}Ew_H\{f(\mathbf{v}_1\oplus H\mathbf{s}_1)\}\leq q\\
&\frac{1}{n}Ew_H\{\mathbf{x}_2\}=\frac{1}{n}Ew_H\{f(H\mathbf{s}_2)\}\leq
q.
\end{align*}
This completes the proof of the direct part of the theorem.
\end{proof}

\vspace{5mm}

As stated above, the achievability for the capacity region follows
by time sharing the corner points $(H_b(q),0)$ and
$(0,H_b(q))$ where $q=\min\{q_1,q_2\}$.
It is also interesting to
see how to achieve the rate sum $H_b(q)$ for an arbitrary rate
pair $(R_1,R_2)$ without time sharing.
For that, let the message of user $1$ be
$\mathbf{m}_1\in \mathbb{Z}_2^{l_1}$ and the message of user $2$
be $\mathbf{m}_2\in \mathbb{Z}_2^{l_2}$ where $l_1+l_2=n-k$. We
define the following syndromes in $\mathcal{C}$
\begin{align*}
\mathbf{v}_1&\triangleq[\mathbf{m}_1\;\underbrace{0\;0\;\ldots\;0}_{l_2}]\in \mathbb{Z}_2^{n-k}\\
\mathbf{v}_2&\triangleq[\underbrace{0\;0\;\ldots\;0
}_{l_1}\;\mathbf{m}_2]\in \mathbb{Z}_2^{n-k}\\
\mathbf{v}&\triangleq\mathbf{v}_1 \oplus \mathbf{v}_2.
\end{align*}
Clearly, given the syndrome $\mathbf{v}$ the syndromes
$\mathbf{v}_1$ and $\mathbf{v}_2$ are fully known and the messages
$\mathbf{m}_1$ and $\mathbf{m}_2$ as well. Let
$\mathbf{\tilde{x}}_i=f(\mathbf{v}_i)$ be the coset leader of
$\mathbf{v}_i$ for $i=1,2$. In this case the transmission scheme
is as follow:
\begin{itemize}
\item Encoder of user $1$: transmit $
\mathbf{x}_1=(\tilde{\mathbf{x}}_1 \oplus
\mathbf{s}_1)\;\mbox{mod}\;\mathbb{C}=f(\mathbf{v}_1\oplus
H\mathbf{s}_1) $.

\item Encoder of user $2$: transmit $
\mathbf{x}_2=(\tilde{\mathbf{x}}_2 \oplus
\mathbf{s}_2)\;\mbox{mod}\;\mathbb{C}=f(\mathbf{v}_2\oplus
H\mathbf{s}_2) $.

\item Decoder: reconstruct $\hat{\mathbf{v}}
=H\cdot\Big(\mathbf{y}\;\rm{mod}\;\mathbb{C}\Big)$.
\end{itemize}
Therefore, we have that
\begin{align*}
\hat{\mathbf{v}} &=
H\cdot\Big(\mathbf{y}\;\rm{mod}\;\mathbb{C}\Big) \\
&= H\cdot\Big(\tilde{\mathbf{x}}_1 \oplus
\tilde{\mathbf{x}}_2\Big) =\mathbf{v}_1 \oplus \mathbf{v}_2
=\mathbf{v}.
\end{align*}
The sum capacity is achieved since
$R_1+R_2=\frac{l_1+l_2}{n}=\frac{n-k}{n}\geq H_b(q)-\epsilon$
where $\epsilon\rightarrow 0$ as $n\rightarrow\infty$ which
satisfies the input constraints.


\section{A Single-Letter Characterization for the Capacity Region}\label{sec:SingleLetter}

In this section we characterize the best known single-letter
region \eqref{eq:SingleLetterRegion} for the binary doubly-dirty
MAC \eqref{eq:BinaryModel}, and show that it is strictly contained
in the capacity region \eqref{eq:Capacity}. For simplicity, we
shall assume identical input constraints, i.e., $q_1=q_2=q$.
\begin{definition}
For a given $q$, the best known single-letter rate region for the
binary doubly-dirty MAC \eqref{eq:BinaryModel}, denoted by
$\mathcal{R}_{BSL}(q)$, is the set of all rate pairs $(R_1,R_2)$
satisfying \eqref{eq:SingleLetterRegion} with the additional
constraints that $E\mathbf{X}_1,E\mathbf{X}_2 \leq q$.
\end{definition}

In the following theorem we give a closed form expression for
$\mathcal{R}_{BSL}(q)$.
\begin{theorem}\label{thm:BSLRate}
The best known single-letter rate region for the binary
doubly-dirty MAC \eqref{eq:BinaryModel} is a triangular region
given by
\begin{align}
&\mathcal{R}_{BSL}(q) = \Bigg\{(R_1,R_2):R_1+R_2\leq u.c.e
\Big[2H_b(q)-1\Big]^+\Bigg\},\label{eq:SumRateBound2}
\end{align}
where $u.c.e$ is the upper convex envelope with respect to $q$,
and $[x]^+\triangleq\max\{0,x\}$.
\end{theorem}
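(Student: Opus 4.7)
For the ``corner'' of the triangle I would take, for each user, a Costa-style binning variable
\[
U_i = X_i \oplus S_i, \qquad X_i \sim \mathrm{Bern}(q), \quad X_i \perp S_i.
\]
Then $U_i\sim\mathrm{Bern}(1/2)$, $I(U_i;S_i)=H(U_i)-H(U_i\mid S_i)=1-H_b(q)$, and—crucially—$Y=X_1\oplus X_2\oplus S_1\oplus S_2=U_1\oplus U_2$, so $I(U_1,U_2;Y)=H(Y)=1$. Substituting into the sum‐rate bound of \eqref{eq:SingleLetterRegion0} yields $R_1+R_2\le 2H_b(q)-1$, which is achievable whenever it is non-negative. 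To realise the upper concave envelope I would time-share this scheme with a ``silent'' scheme ($X_1=X_2=0$, rate $0$): using the binning scheme at some $q'\ge q$ a fraction $\lambda=q/q'$ of the time satisfies the input constraint and gives sum-rate $\lambda[2H_b(q')-1]^+$, whose supremum over admissible $(\lambda,q')$ is exactly $\mathrm{u.c.e.}[2H_b(q)-1]^+$.

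\textbf{Converse, main identity.} For the converse I would start from the Jafar sum-rate bound and rewrite it, using $H(S_i)=1$ and the factorisation $P(U_1,X_1|S_1)P(U_2,X_2|S_2)$, as
\[
I(U_1,U_2;Y)-I(U_1;S_1)-I(U_2;S_2)
= H(Y)-2 + H(S_1\mid U_1)+H(S_2\mid U_2)-H(Y\mid U_1,U_2).
\]
WLOG assume $X_i=f_i(U_i,S_i)$ (absorb the conditional randomness into $U_i$), and classify each value $u_i$ by the type of the map $s_i\mapsto f_i(u_i,s_i)$: type A ($X_i=0$), B ($X_i=1$), C ($X_i=S_i$), D ($X_i=1\oplus S_i$). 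With $\pi_i(u_i)=\Pr(S_i=1\mid U_i=u_i)$ and $a_{u_i}=\Pr(X_i\oplus S_i=1\mid U_i=u_i)$, the quantities $H(S_i\mid U_i)$, $H(Y\mid U_1,U_2)$, and the budget $EX_i$ are all expressible in closed form in terms of the type probabilities $p_A^{(i)},\dots,p_D^{(i)}$ and $\{\pi_i(u_i)\}$. Using $H(Y)\le 1$, $H_b(a\ast b)\ge\max(H_b(a),H_b(b))$ (Mrs.~Gerber), and Jensen on $H_b$ with the constraints $\sum_{u_i}P(u_i)\pi_i(u_i)=1/2$ and $EX_i\le q$, one obtains a per-distribution bound for $\phi\triangleq I(U_1,U_2;Y)-I(U_1;S_1)-I(U_2;S_2)$ in terms of $(EX_1,EX_2)$.

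\textbf{From single-distribution to u.c.e.} Since $\mathcal{R}_{BSL}(q)$ is defined as a convex hull, I would use a time-sharing variable $T$: any point is achieved by $P(T)P(U_1,X_1\mid S_1,T)P(U_2,X_2\mid S_2,T)$ with $\sum_t P(t)\,E[X_i\mid T{=}t]\le q$, and the sum rate is $\sum_t P(t)\phi_t$. If the per-$t$ bound can be written as $\phi_t\le[2H_b(\bar q_t)-1]^+$ for some $\bar q_t$ controlled by $(E[X_1\mid T{=}t],E[X_2\mid T{=}t])$, the definition of the concave envelope immediately gives $R_1+R_2\le\mathrm{u.c.e.}[2H_b(q)-1]^+$ at $q=\tfrac12(EX_1+EX_2)$.

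\textbf{Main obstacle.} The hard part is precisely this last step: single distributions can have $\phi>[2H_b(q)-1]^+$ (I would verify by taking, e.g.\ $U_i$ binary with one value of type A and one of type C at small $q$), so one cannot simply show $\phi\le[2H_b(q)-1]^+$ per distribution. The required single-distribution bound must already respect the \emph{envelope}; concretely, I would try to prove $\phi\le\mathrm{u.c.e.}[2H_b(\cdot)-1]^+$ evaluated at $\tfrac12(EX_1+EX_2)$, which amounts to showing that any non-trivial mixture of types within a single $(U_1,U_2)$ can be ``re-expressed'' as a bona fide time-sharing between silent periods and the Bern$(q')$ binning scheme above. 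That combinatorial/optimisation argument—balancing the MGL lower bound on $H(Y\mid U_1,U_2)$ against the Jensen upper bound on $H(S_i\mid U_i)$ subject to the moment constraint $\sum P(u_i)\pi_i(u_i)=1/2$—is where the real work of the proof lies.
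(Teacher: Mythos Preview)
Your achievability argument and the starting identity for the converse match the paper exactly. The gap is in the optimisation step, which you correctly flag as ``the real work'' but do not carry out; the tools you propose (type classification, Mrs.~Gerber, Jensen on $H_b$ with the constraint $\sum P(u_i)\pi_i(u_i)=\tfrac12$) do not obviously close it. In particular, Jensen on $H_b$ under that moment constraint only gives $H(S_i\mid U_i)\le 1$, which is vacuous, and MGL goes the wrong way for bounding $H(Y\mid U_1,U_2)$ from below in a form that matches $H(S_i\mid U_i)$.

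The paper resolves the obstacle you identify in a different and more structured way. After conditioning on $(U_1,U_2)=(u_1,u_2)$ it sets $V_i=S_i$, $V_i'=S_i\oplus X_i$ (under the conditional law), and reduces the problem to maximising
\[
F(P_{V_1,V_1'},P_{V_2,V_2'})=\big[H(V_1)+H(V_2)-H(V_1'\oplus V_2')-1\big]^+
\quad\text{s.t.}\quad P(V_i\neq V_i')\le q_i,
\]
then takes the u.c.e.\ of $F_{\max}(q_1,q_2)$ in the \emph{constraint parameters} (Lemma~1). The key technical step you are missing is Lemma~2: for fixed $\alpha_i=P(V_i=1)$, the constrained minimum of $H(V_1'\oplus V_2')$ is attained when $V_i\to V_i'$ is a $Z$-channel, giving $P(V_i'=1)=[\alpha_i-q_i]^+$ and hence
\[
F_{\max}(q,q)=\max_{\alpha_1,\alpha_2\in[0,1/2]}\big[H_b(\alpha_1)+H_b(\alpha_2)-H_b\big([\alpha_1-q]^+\ast[\alpha_2-q]^+\big)-1\big]^+.
\]
Lemma~3 then handles this two-parameter problem by a case split: for $q\ge q_c\approx 0.1501$ one shows directly that $\alpha_1=\alpha_2=q$ is optimal, so $F_{\max}(q,q)=2H_b(q)-1$; for $0<q<q_c$ one does \emph{not} prove $F_{\max}(q,q)\le[2H_b(q)-1]^+$ (which, as you note, is false), but instead proves the weaker $F_{\max}(q,q)\le C^\ast q$, the tangent line to $2H_b(q)-1$ at $q^\ast=1-1/\sqrt{2}$. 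Since $q_c<q^\ast$, the u.c.e.\ of this piecewise bound coincides with $\mathrm{u.c.e.}\{[2H_b(q)-1]^+\}$. That tangent-line manoeuvre is precisely the resolution of your ``main obstacle'': you need not bound each distribution by $[2H_b(q)-1]^+$, only by a function with the same concave envelope.
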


\figref{BinaryDMAC} shows the sum capacity of the binary
doubly-dirty MAC \eqref{eq:Capacity} versus the best known
single-letter rate sum \eqref{eq:SumRateBound2} for equal input
constraints. The latter is strictly contained in the capacity
region which is achieved by a linear code. The quantity
$[2H_b(q)-1]^+$ is not a convex - $\cap$ function with respect to
$q$. The upper convex envelope of $[2H_b(q)-1]^+$ is achieved by
time-sharing between the points $q=0$ and $q=q^*\triangleq
1-1/\sqrt{2}$, therefore it is given by
\begin{align}\label{eq:SumCapacityBSL}
R_1+R_2\leq\Bigg\{
\begin{array}{cc}
  2H_b(q)-1, & q^*\leq q\leq 1/2 \\
  C^*q, & 0\leq q\leq q^* \\
\end{array},
\end{align}
where $C^*\triangleq\frac{2H_b(q^*)-1}{q^*}$.

\begin{proof}
\textbf{\textit{The direct part}} is shown by choosing in
\eqref{eq:SingleLetterRegion0} $U_1=S_1 \oplus X_1$ and $U_2=S_2
\oplus X_2$, where $X_1,X_2\sim \mbox{Bernoulli}(q)$ and
$X_1,X_2,S_1,S_2$ are independent. From
\eqref{eq:SingleLetterRegion0} the achievable rate sum is given by
\begin{align}
&R_1+R_2 = I(U_1,U_2;Y)-I(U_1,U_2;S_1,S_2)\nonumber\\
&=H(U_1|S_1)+H(U_2|S_2)-H(U_1,U_2|U_1 \oplus U_2)\label{eq:600}\\
&=H(U_1|S_1)+H(U_2|S_2)-H(U_1|U_1 \oplus U_2)-H(U_2|U_1 \oplus U_2,U_1)\label{eq:605}\\
&=H(X_1)+H(X_2)-H(U_1|U_1 \oplus U_2)\label{eq:610}\\
&=2H_b(q)-1,\label{eq:620}
\end{align}
where \eqref{eq:600} follows since $Y=U_1 \oplus U_2$;
\eqref{eq:605} follows from the chain rule for entropy;
\eqref{eq:610} follows since $U_2$ is fully known given $U_1
\oplus U_2,U_1$ thus $H(U_2|U_1 \oplus U_2,U_1)=0$; \eqref{eq:620}
follows since $H(X_i)\leq H_b(q)$ and since $U_1,U_2$ are
independent with $P(U_i=1)=1/2$ thus $H(U_1|U_1 \oplus
U_2)=H(U_1)=1$.

\textbf{\textit{The converse part}} of the proof is given in
Appendix \ref{App:Appendix2}.
\end{proof}

\begin{figure}[h]
\begin{center}
\epsfig{file=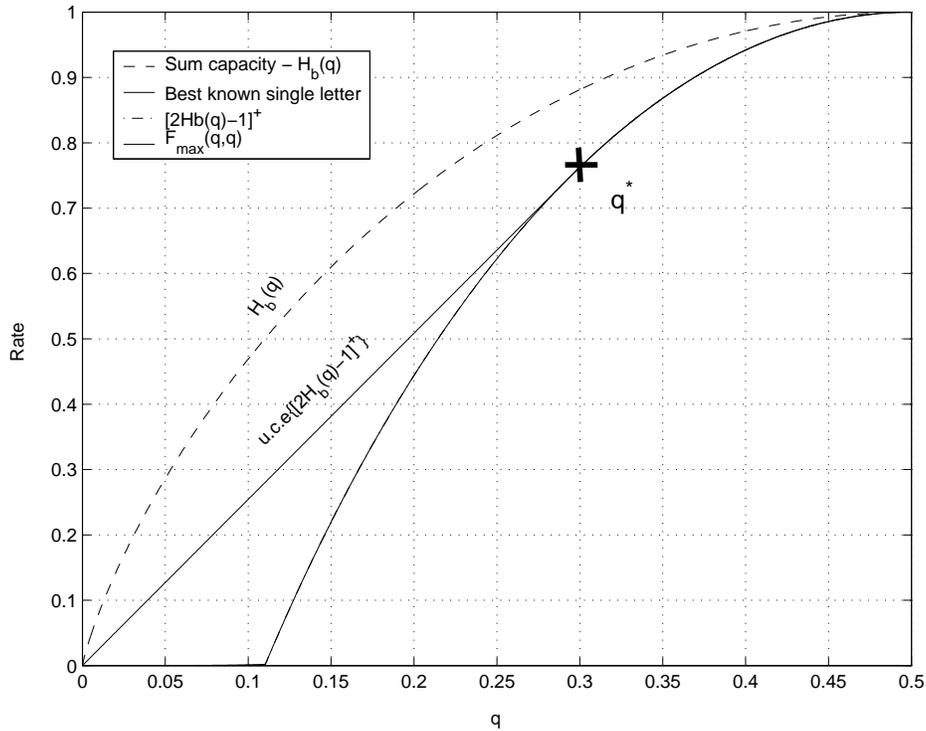,scale=0.7} \caption{The rate sum of
binary doubly-dirty MAC vs. best known single-letter rate sum with
input constraints $EX_1,EX_2\leq q$.}\label{fig:BinaryDMAC}
\end{center}
\end{figure}

We see that the binary doubly-dirty MAC is a memoryless channel
coding problem, where the capacity region is achievable by a
linear code, while the best known single-letter rate region is
strictly contained in the capacity region. This may be explained
by the fact that each user has only \emph{partial} side
information, and distributed random binning is unable to capture
the linear structure of the channel.

In order to understand the limitation of random binning versus a
linear code, we consider these two schemes for high enough $q$,
that is $2H_b(q)-1 \geq 0$. The random binning scheme uses
$U_i=X_i \oplus S_i$ where $X_i\sim\mbox{Bernoulli}(q)$ and
$S_i\sim\mbox{Bernoulli}(1/2)$ are independent, therefore $Y=U_1
\oplus U_2$ where $U_i\sim\mbox{Bernoulli}(1/2)$ for $i=1,2$. Each
transmitter maps the message (bin) $W_i$ into a codeword
$\mathbf{u}_i$ which is with high probability at a Hamming
distance of $nq$ from $\mathbf{s}_i$. Therefore, given the vectors
$(\mathbf{s}_1^n,\mathbf{s}_2^n)$, the available \emph{input
space} is approximately
$2^{nH(U_1,U_2|S_1,S_2)}=2^{nH(X_1,X_2)}=2^{2nH_b(q)}$. Given the
received vector $\mathbf{y}$, the \emph{residual ambiguity} is
given by $2^{nH(U_1,U_2|Y)}=2^{n[H(U_1|Y)+H(U_2|Y,U_1)]}=2^{n}$,
since $H(U_1|Y)=1$ and $H(U_2|Y,U_1)=0$. As a result, the
achievable rate sum is given by
\begin{align*}
R_1+R_2 =
\frac{1}{n}\log_2\Big(\frac{|\mbox{input
space}|}{|\mbox{residual ambiguity space}|}\Big)\approx 2H_b(q)-1.
\end{align*}
The linear coding scheme shown in \thrmref{Capacity} has the same
input space size as the random binning scheme, i.e.,
$2^{2nH_b(q)}$, since each user has $2^{nH_b(q)}$ cosets. However,
given the received vector $\mathbf{y}$ there are $2^{nH_b(q)}$
possible pairs of cosets, i.e., the residual ambiguity is only
$2^{nH_b(q)}$. Therefore, the linear code achieves rate sum of
$R_1+R_2 \approx 2H_b(q)-H_b(q)=H_b(q)$. The advantage of the
linear coding scheme results from the ``ordered structure'' of the
linear code, which decreases the residual ambiguity from $1$ bit
in random coding to $H_b(q)$.

The following example illustrates the above arguments for the case
that user $2$ is a ``helper'' for user $1$, i.e, $R_2=0$, and user
$1$ transmits at his highest rate for each technique (random
binning or linear coding). Table \ref{tab:RandomBinningLinearCode}
summarizes the rates and codebooks sizes for each user for
$q=0.3$, that is $H_b(q)\thickapprox 0.88$ bit.

\begin{table}[h]
\begin{center}
\begin{tabular}{|c|c|c|}
  \hline
  & Random binning & Linear code \\
  \hline
  Rate sum & $2H_b(q)-1=0.76$ bit & $H_b(q)=0.88$ bit\\
  \hline
  Codewords per bin/coset &  $2^{nI(U_i;S_i)}=2^{n[1-H_b(q)]}=2^{0.12n}$ & $2^{n[1-H_b(q)]}=2^{0.12n}$\\
  \hline
  Helper (user $2$) - codebook size & $2^{nI(U_2;S_2)}=2^{n[1-H_b(q)]}=2^{0.12n}$ & $2^{n[1-H_b(q)]}=2^{0.12n}$\\
  \hline
  User $1$ - codebook size & $2^{0.76n}2^{0.12n}=2^{0.88n}$ & $2^{0.12n}2^{0.88n}=2^{n}$\\
  \hline
  Number of possible codeword pairs & $2^{0.88n}2^{0.12n}=2^{n}$ & $2^{n}2^{0.12n}=2^{1.12n}$\\
  \hline
\end{tabular}
\end{center}
\caption{Random binning and linear coding schemes codebooks sizes
for the helper problem with
$q=0.3$.}\label{tab:RandomBinningLinearCode}
\end{table}

\begin{figure}[t]
  \begin{center}
    \input{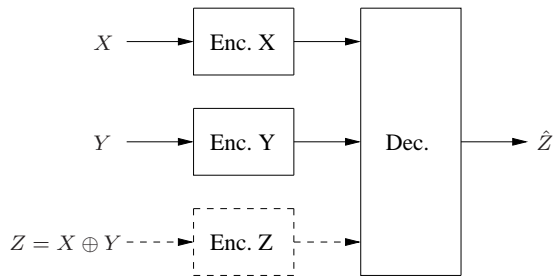}
    \caption{The Korner-Marton configuration.}\label{fig:KMModel}
  \end{center}
\end{figure}

Korner and Marton \cite{KornerMarton79} observed a similar
behavior for the ``two help one'' source coding problem shown in
\figref{KMModel}. In this problem, there are three binary sources
$X,Y,Z$, where $Z=X \oplus Y$, and the joint distribution of $X$
and $Y$ is symmetric with $P(X\neq Y)=\theta$. The goal is to
encode the sources $X$ and $Y$ separately such that $Z$ can be
reconstructed losslessly. Korner and Marton showed that the rate
sum required is at least
\begin{align}
R_x+R_y \geq 2H(Z),\label{eq:KornerMartonRateSum}
\end{align}
and furthermore, this rate sum can be achieved by a linear code:
each encoder transmits the syndrome of the observed source
relative to a good linear binary code for a BSC with crossover
probability $\theta$.

In contrast, the ``one help one'' problem
\cite{AhlswedeKorner75,Wyner75} has a closed single-letter
expression for the rate region, which corresponds to a random
binning coding scheme. Korner and Marton \cite{KornerMarton79}
generalize the expression of \cite{AhlswedeKorner75,Wyner75} to
the ``two help one'' problem, and show that the minimal rate sum
required using this expression is given by
\begin{align}
R_x+R_y \geq H(X,Y).\label{eq:SlepianWolfSumRate}
\end{align}
The region \eqref{eq:SlepianWolfSumRate} corresponds to
Slepian-Wolf encoding of $X$ and $Y$, and it can also be derived
from the Burger-Tung achievable region \cite{Berger77} for
distributed coding for $X$ and $Y$ with one reconstruction
$\hat{Z}$ under the distortion measure $d(X,Y,\hat{Z})\triangleq X
\oplus Y \oplus \hat{Z}$. Clearly, the region
\eqref{eq:SingleLetterRegion0} is strictly contained in the
Korner-Marton region $R_x+R_y \geq 2H(Z)$
\eqref{eq:KornerMartonRateSum} (since $H(X,Y)=1+H(Z)> 2H(Z)$ for
$Z\sim \mbox{Bernoulli}(\theta)$, where $\theta \neq
\frac{1}{2}$). For further background on related source coding
problems, see \cite{KrithivasanPradhan07}.



\section{The Gaussian Doubly-Dirty MAC}\label{sec:GaussianCase}
In this section we introduce our conjecture regarding the rate
loss of the best known single-letter characterization for the
capacity region of the two-user Gaussian doubly-dirty MAC at high
SNR. The Gaussian doubly-dirty MAC
\cite{PhilosofKhistiErezZamir07} is given by
\begin{eqnarray}
Y=X_1+X_2+S_1+S_2+Z\label{eq:GaussianDirtyMAC},
\end{eqnarray}
where $Z\sim\mathcal{N}(0,N)$ is independent of $X_1,X_2,S_1,S_2$,
and where user $1$ and user $2$ must satisfy the power
constraints, $\frac{1}{n}\sum_{i=1}^{n}X_{1_i}^2\leq P_1$ and
$\frac{1}{n}\sum_{i=1}^{n}X_{2_i}^2\leq P_2$ see
\figref{DoublyDirtyMACModel}.
The interference signals $S_1$ and $S_2$
are known non-causally to the transmitters of user $1$ and user
$2$, respectively. We shall assume that $S_1$ and $S_2$ are
independent Gaussian with variances going to infinity, i.e.,
$S_i\sim\mathcal{N}(0,Q_i)$ where $Q_i\rightarrow\infty$ for
$i=1,2$. The signal to noise ratios for the two users are
$SNR_1=\frac{P_1}{N}$ and $SNR_2=\frac{P_2}{N}$.

The capacity region at high SNR, i.e., $SNR_1,SNR_2 \gg 1$, is
given by \cite{PhilosofKhistiErezZamir07},
\begin{align}
R_1+R_2\leq
\frac{1}{2}\log_2\Bigg(\frac{\min\{P_1,P_2\}}{N}\Bigg),\label{eq:GaussianDMACCapacity}
\end{align}
and it is achievable by a modulo lattice coding scheme of
dimension going to infinity. In contrast, it was shown in
\cite{PhilosofKhistiErezZamir07} that at high SNR and strong
independent Gaussian interferences, the natural generalization of
Costa's strategy (\ref{eq:CostaRandomBinning}) for the two users
case, i.e., with auxiliary random variables $U_1=X_1+S_1$ and
$U_2=X_2+S_2$, is not able to achieve \emph{any positive rate}. A
better choice for $U_1$ and $U_2$ suggested in
\cite{PhilosofKhistiErezZamir07} is a modulo version of Costa's
strategy (\ref{eq:CostaRandomBinning}),
\begin{align}
U_i^*=[X_i+S_i]\;\mbox{mod}\:\Delta_i ,
\label{eq:ScalarModulo}
\end{align}
where $\Delta_i=\sqrt{12P_i}$, and where
$X_i\sim\mbox{Unif}\left([-\frac{\Delta_i}{2},\frac{\Delta_i}{2})\right)$
is independent of $S_i$, for $i=1,2$. In this case the rate loss
with respect to \eqref{eq:GaussianDMACCapacity} is
$\frac{1}{2}\log_2\Big(\frac{\pi e}{6}\Big)\thickapprox
0.254\;\rm{bit}$.

The best known single-letter capacity region for the Gaussian
doubly-dirty MAC \eqref{eq:GaussianDirtyMAC} is defined as the set
of all rate pairs $(R_1,R_2)$ satisfying
\eqref{eq:SingleLetterRegion}, where $X_1$ and $X_2$ are
restricted to the power constraints $EX_1^2\leq P_1$ and
$EX_2^2\leq P_2$.
We believe that for high SNR and strong interference,
the modulo-$\Delta$ strategy
(\ref{eq:ScalarModulo})
is an optimum choice
for $(X_1,X_2,U_1,U_2)$ in \eqref{eq:SingleLetterRegion} for the
Gaussian doubly-dirty MAC.
This implies the following conjecture
about the rate loss  of the best known single-letter characterization.

\begin{conjecture}
For the Gaussian doubly-dirty MAC, at high SNR and strong
interference, the best known single-letter expression
$R_{BSL}^{sum}$ \eqref{eq:SingleLetterRegion} looses
\begin{align}
C^{sum}-R_{BSL}^{sum}=\frac{1}{2}\log_2\Big(\frac{\pi
e}{6}\Big)\approx 0.254\;\mbox{bit},\label{eq:ShapingLoss}
\end{align}
with respect to the sum capacity $C^{sum}$
\eqref{eq:GaussianDMACCapacity}.
\end{conjecture}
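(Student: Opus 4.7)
The achievability direction, $R_{BSL}^{sum} \geq C^{sum} - \tfrac{1}{2}\log_2(\pi e/6)$, is already established in \cite{PhilosofKhistiErezZamir07} by substituting the scalar modulo strategy \eqref{eq:ScalarModulo} with $\Delta_i = \sqrt{12\min\{P_1,P_2\}}$ into \eqref{eq:SingleLetterRegion0}. The conjecture therefore reduces to the matching converse: no admissible choice of $(U_1,U_2,X_1,X_2)$ satisfying the factorization \eqref{eq:SingleLetterRegion} and the power constraints $E X_i^2 \leq P_i$ can beat this rate asymptotically as $Q_1,Q_2\rightarrow\infty$.

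The starting point is to exploit unconditional independence of $U_1$ and $U_2$ (which follows from $S_1\perp S_2$ together with the product form $P(U_1,X_1\mid S_1)P(U_2,X_2\mid S_2)$) to rewrite the sum-rate inequality in \eqref{eq:SingleLetterRegion0} as
\begin{align*}
R_1+R_2 \;\leq\; h(U_1\mid S_1) + h(U_2\mid S_2) - h(U_1,U_2\mid Y).
\end{align*}
For the scalar modulo choice with common $\Delta = \sqrt{12\min\{P_1,P_2\}}$ one verifies $h(U_i\mid S_i) = \log_2\Delta$ and $h(U_1,U_2\mid Y)\approx \log_2\Delta + \tfrac{1}{2}\log_2(2\pi e N)$, which reproduces the conjectured value. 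For the converse the plan is to prove the joint bound $h(U_1\mid S_1)+h(U_2\mid S_2)-h(U_1,U_2\mid Y) \leq \tfrac{1}{2}\log_2(12\min\{P_1,P_2\}) - \tfrac{1}{2}\log_2(2\pi e N)$, which is exactly $C^{sum}-\tfrac{1}{2}\log_2(\pi e/6)$. The lower bound $h(U_1,U_2\mid Y) \geq \tfrac{1}{2}\log_2(2\pi e N) + (\text{receiver ambiguity})$ should follow from a conditional EPI together with the observation that $Y$ reveals only the ``sum modulo $\Delta$'' of the two users' effective contributions, so the decoder suffers at least $\log_2\Delta$ bits of residual ambiguity in separating $U_1$ from $U_2$.

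The main obstacle — and the reason the statement is posed as a conjecture rather than a theorem — is showing rigorously that $h(U_i\mid S_i)$ is essentially capped by $\tfrac{1}{2}\log_2(12 P_i)$, the entropy of a uniform random variable with variance $P_i$, rather than by the Gaussian maximum-entropy value $\tfrac{1}{2}\log_2(2\pi e P_i)$. Heuristically, as $Q_i\rightarrow\infty$ and with $I(U_i;S_i)$ forced to remain finite, $U_i$ must depend on $S_i$ through a periodic or modulo-type functional whose fundamental cell has length at most $\sqrt{12P_i}$ (the longest interval consistent with $EX_i^2\leq P_i$); under such periodicity $U_i\mid S_i$ is effectively uniform on an interval, and the gap between its entropy and the Gaussian one is exactly $\tfrac{1}{2}\log_2(\pi e/6)$. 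A natural line of attack is to study the limiting functional $\lim_{Q_i\rightarrow\infty}[h(U_i\mid S_i) - I(U_i;S_i)]$ subject to $EX_i^2\leq P_i$ and show that it is maximized by the uniform-on-interval distribution. Combined with the receiver-ambiguity lower bound on $h(U_1,U_2\mid Y)$ described above — and with a supplementary argument in the asymmetric-power case pinning both users to a common effective $\Delta = \sqrt{12\min\{P_1,P_2\}}$ — this would close the conjecture.
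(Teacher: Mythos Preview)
This statement is a conjecture, and the paper does not prove it either; it offers only a partial reduction and identifies the remaining obstacle. Your proposal and the paper's discussion agree on the easy direction (the scalar modulo choice \eqref{eq:ScalarModulo} achieves the conjectured value) and on the nature of the hard direction (one must rule out auxiliary choices that beat the uniform/lattice one). But the routes to the partial converse differ.

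The paper mirrors the binary proof: it rewrites the sum-rate bound as $h(S_1|U_1)+h(S_2|U_2)-h(Y|U_1,U_2)+h(Y)-h(S_1)-h(S_2)$, uses $h(Y)\le h(S_1+S_2)+o(1)$ in the strong-interference limit, and then \emph{conditions on} $(U_1,U_2)=(u_1,u_2)$. This collapses the problem to an unconditional optimization over pairs $(V_i,V_i')$ with $E(V_i-V_i')^2\le P_i$ and $h(V_i)\le h(S_i)$ (\lemref{GaussianBSLRate}), followed by the upper convex envelope in $(P_1,P_2)$. The remaining gap is then cleanly isolated: show that $h(V_1)+h(V_2)-h(V_1'+V_2'+Z)$ under the second-moment constraint is maximized by a pair of scalar lattice quantizers, which the paper ties to entropy-constrained scalar quantization in high resolution \cite{ErezShamaiZamir05}.

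Your decomposition $h(U_1|S_1)+h(U_2|S_2)-h(U_1,U_2|Y)$ is algebraically valid, but the plan to bound the three terms separately is fragile. First, $h(U_i|S_i)$ carries no intrinsic upper bound: scaling $U_i\mapsto cU_i$ shifts it by $\log|c|$, and the compensating growth is hidden in $h(U_1,U_2|Y)$, so a standalone cap of $\tfrac{1}{2}\log_2(12P_i)$ cannot hold. Second, your lower bound on $h(U_1,U_2|Y)$ invokes a ``residual ambiguity of $\log_2\Delta$'' that presupposes the very modulo structure you are trying to prove is optimal; for a generic $(U_1,U_2)$ there is no $\Delta$. Third, the claim that strong interference forces $U_i$ to depend on $S_i$ through a periodic functional is the heart of the matter, and it is asserted rather than argued. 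The paper's conditioning-on-$(u_1,u_2)$ trick sidesteps the first two issues by turning the scale-invariant, conditional problem into a concrete second-moment-constrained one on $(V_i,V_i')$; what remains open there is precisely your third point, now phrased as an optimality-of-lattice-quantizers question.
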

Note that the right hand side of \eqref{eq:ShapingLoss} is the
well known ``shaping loss'' \cite{Forney89a} (equivalent to a
$~1.53dB$ power loss).

A heuristic approach to attack the proof of this conjecture is to
follow the steps of the proof of the converse part in the binary
case (Theorem \ref{thm:BSLRate}). First, in
\lemref{GaussianBSLRate} we derive a simplified single-letter
formula, $\overline{G}_{max}(P_1,P_2)$, which is analogous to
\lemref{EquivalentBSL} in the binary case.
The next step would be to optimize this expression.
However, an optimal choice for the auxiliary random variables
$V_1,V_1',V_2,V_2'$
(provided in the binary case by
\lemref{EquivalentBSL1} and \lemref{OuterBound})
is unfortunately still missing for the Gaussian case.
The expression in
\lemref{GaussianBSLRate} is close in spirit to the point-to-point
\emph{dirty tape} capacity for high SNR and strong interference
\cite{ErezShamaiZamir05}. In \cite{ErezShamaiZamir05} it is shown
that optimizing the capacity is equivalent to minimum
\emph{entropy-constrained scalar quantization} in high resolution,
which is achieved by a lattice quantizer. Clearly, if we could
show a similar lemma for the two variable pairs in the
maximization of \lemref{GaussianBSLRate}, i.e., that it is
achieved by a pair of lattice quantizers, then the conjecture
would be an immediate consequence.

It should be noted that the above discussion is valid only for
strong interferences $S_1$ and $S_2$. For interference with finite
power, it seems that cancelling the interference part of the time
and staying silence the rest of the time (like in the time-sharing
region $ 0\leq q \leq q^*$ in the binary case) may achieve better
rates.

\section{Summary}\label{sec:Summary}

A memoryless information theoretic problem is considered open as
long as we are missing a general single-letter characterization
for its information performance. This goes hand in hand with the
optimality of the random coding approach for those problems which
are currently solved. We examined this traditional view for the
memoryless doubly-dirty MAC.

In the binary case, we showed that the best known single letter
characterization is strictly contained in the region achievable by
linear coding, and that the latter is in fact the full capacity
region of the problem.
In the Gaussian case, we conjectured that the best known
single-letter characterization suffers an inherent rate loss
(equal to the well known ``shaping loss'' $0.5 \log(\pi e /6)$),
and we provide a partial proof. This is in contrast to the
asymptotic optimality (dimension $\rightarrow \infty$) of lattice
strategies, as recently shown in \cite{PhilosofKhistiErezZamir07}.

The underlying reason for these performance gaps   
is that
{\em random} binning is in general not optimal when side information is distributed
among more than one terminal in the network.
In the specific case of the doubly-dirty MAC
(like in Korner-Marton's modulo-two sum problem
\cite{KornerMarton79} and similar settings \cite{NazerGastpar07,KrithivasanPradhan07}),
the linear structure of the network allows to show that {\em linear binning}
is not only better, but it is capacity achieving.

\appendices
\section{A Closed Form Expression for the Capacity of the Binary MAC with One Dirty User}\label{App:Appendix1}
We consider the binary dirty MAC \eqref{eq:BinaryModel} with
$S_2=0$,
\begin{align}\label{eq:OneDirtyUserModel}
Y=X_1 \oplus X_2 \oplus S_1,
\end{align}
where $S_1\sim\text{Bernoulli(1/2)}$ is known non-causally at the
encoder of user $1$ with the input constraints
$\frac{1}{n}W_H(\mathbf{x}_i)\leq q_i$ for $i=1,2$.
We show that
%
%
the common message ($W_1=W_2=W$) capacity 
of this channel is given by
\begin{align}
\label{eq:CapacitySingleInterference}
C_{com}=H_b(q_1).
\end{align}


To prove (\ref{eq:CapacitySingleInterference}),
consider the general expression for
the common message capacity of the MAC with one informed
user \cite{BaruchShamaiVerdu06}, given by
\begin{align}\label{eq:OneDirtyUserCapapcity}
C_{com}=
\max_{U_1,X_1,X_2}\{I(U_1,X_2;Y)-I(U_1,X_2;S_1)\},
\end{align}
where the maximization is over al the joint distributions
\begin{align*}
P(S_1,X_1,X_2,U_1,Y)=P(S_1)P(X_2)P(U_1|X_2,S_1)P(X_1|S_1,U_1)P(Y|X_1,X_2,S_1).
\end{align*}
\textbf{\textit{The converse part}} of (\ref{eq:CapacitySingleInterference})
follows since for any
$U_1,X_1,X_2$, the common message rate $R_{com}$ can be upper
bounded by
\begin{align}
R_{com} &= I(U_1,X_2;Y)-I(U_1,X_2;S_1)\nonumber\\
&=H(S_1|U_1,X_2)-H(Y|U_1,X_2)+H(Y)-H(S_1)\nonumber\\
&\leq H(S|U_1,X_2)-H(Y|U_1,X_2)\label{eq:1220}\\
&= H(S_1|U_1,X_2)-H(X_1 \oplus S_1|U_1,X_2)\label{eq:1230}\\
&= H(S_1|T)-H(X_1 \oplus S_1|T)\label{eq:1240}\\
&= E_T\Big\{H(S_1|T=t)-H(X_1 \oplus
S_1|T=t)\Big\}\label{eq:1250}\\
&= E_T\Big\{H_b(\alpha_{t})-H_b(\beta_{t})\Big\}\label{eq:1260},
\end{align}
where \eqref{eq:1220} follows since $H(Y)\leq 1$ and $H(S_1)=1$;
\eqref{eq:1230} follows since $Y=X_1 \oplus X_2 \oplus S_1$;
\eqref{eq:1240} follows the definition $T\triangleq(U_1,X_2)$;
\eqref{eq:1250} follows from the definition of the conditional
entropy; \eqref{eq:1260} follows from the following definitions
$\alpha_{t}\triangleq P(S_1=1|T=t)$ and $\beta_{t}\triangleq P(S_1
\oplus X_1 =1|T=t)$ for any $t\in T$. We also define
$q_{1|t}\triangleq P(X_1=1|T=t)=E\{X_1|T=t\}$, therefore the input
constraint of user $1$ can be written as
\begin{align}\label{eq:1310}
EX_1=E_TE\{X_1|T=t\}=E_T\{q_{1|t}\}\leq q_1.
\end{align}
Without loss of generality, we can only consider $\alpha_t,
\beta_{t}, q_{1|t}\in[0,1/2]$ in \eqref{eq:1260} for any $t\in T$.
Thus,
\begin{align}
R_{com}&\leq E_T\Big\{H_b(\alpha_{t})-H_b\Big([\alpha_t-q_{1|t}]^+\Big)\Big\}\label{eq:1270}\\
&\leq E_T\Big\{H_b(q_{1|t})\Big\}\label{eq:1280}\\
&\leq H_b\Big(E_T\{q_{1|t}\}\Big)\label{eq:1290}\\
&\leq H_b(q_1),\label{eq:1300}
\end{align}
where \eqref{eq:1270} follows from \eqref{eq:1260} and since
$H_b(\beta_t)\geq H_b\Big([\alpha_t-q_{1|t}]^+\Big)$, where
$[x]^+=max\{x,0\}$; \eqref{eq:1280} follows since
$H_b(\alpha_{t})-H_b\Big([\alpha_t-q_{1|t}]^+\Big)$ is increasing
in $\alpha_{t}$ for $\alpha_{t}\leq q_{1|t}\leq 1/2$ and
decreasing in $\alpha_{t}$ for $q_{1|t} < \alpha_{t} \leq 1/2$,
thus the maximum is for $\alpha_{t}=q_{1|t}$; \eqref{eq:1290}
follows from Jensen's inequality since $H_b(\cdot)$ is
convex-$\cap$; \eqref{eq:1300} follows from the input constraint
for user $1$ \eqref{eq:1310}. The converse part follows since the
outer bound is valid for any $U_1$ and $X_1,X_2$ that satisfy the
input constraints.

\textbf{\textit{The direct part}} is shown by using $U_1=X_1
\oplus S_1$ where $X_1$ and $S_1$ are independent with
$X_1\sim\text{Bernoulli}(q_1)$, thus
$U_1\sim\text{Bernoulli}(1/2)$. Furthermore,
$X_2\sim\text{Bernoulli}(q_2)$ which is independent of
$X_1,U_1,S_1$. In this case $Y=U_1 \oplus X_2$, hence
$Y\sim\text{Bernoulli}(1/2)$. Using this choice for $U_1,X_1,X_2$,
the achievable common message rate is given by
\begin{align}
R_{com} &=I(U_1,X_2;Y)-I(U_1,X_2;S_1)\nonumber\\
&=H(S_1|U_1,X_2)-H(Y|U_1,X_2)+H(Y)-H(S_1)\nonumber\\
&=H(X_1)\label{eq:1320}\\
&=H_b(q_1),\nonumber
\end{align}
where \eqref{eq:1320} follows since
$H(S_1|U_1,X_2)=H(S_1|U_1)=H(X_1)$, $H(Y|U_1,X_2)=0$, $H(Y)=1$ and
$H(S_1)=1$.

\section{Proof of the Converse Part of Theorem \ref{thm:BSLRate}}\label{App:Appendix2}
The proof of the converse part follows from
\lemref{EquivalentBSL}, \lemref{EquivalentBSL1} and
\lemref{OuterBound}, whereas \lemref{gr} and \lemref{f2} are
technical results which assist in the derivation of
\lemref{OuterBound}.

Let us define the following functions:
\begin{align}
&F(P_{V_1,V'_1},P_{V_2,V'_2})\triangleq \Big[H(V_1)+H(V_2)-H(V'_1
\oplus V'_2)-1\Big]^+,\label{eq:FDefinition}
\end{align}
where  $[x]^+ = \max(0,x)$; its $(q_1,q_2)$-constrained
maximization with respect to $V_1,V'_1,V_2,V'_2\in\mathbb{Z}_2$
where $(V_1,V'_1)$ and $(V_2,V'_2)$ are independent, i.e.,
\begin{align}
F_{max}(q_1,q_2)\triangleq &\max_{V_1,V'_1,V_2,V'_2}F(P_{V_1,V'_1},P_{V_2,V'_2})\label{eq:FmaxDefinition}\\
&\mbox{s.t}\;\;P(V_i \neq V'_i)\leq
q_i,\;\mbox{for}\;\;i=1,2;\nonumber
\end{align}
and the upper convex envelope of $F_{max}(q_1,q_2)$ with respect
to $q_1,q_2$
\begin{align}
&\overline{F}_{max}(q_1,q_2)\triangleq
u.c.e\Big\{F_{max}(q_1,q_2)\Big\}.\label{eq:FuceDefinition}
\end{align}
In the following lemma we give an outer bound for the
single-letter region \eqref{eq:SingleLetterRegion} of the binary
doubly-dirty MAC in the spirit of \cite[Lemma 3]{CohenZamir07} and
\cite[Proposition 1]{ErezShamaiZamir05}.

\begin{lemma}\label{lem:EquivalentBSL}
The best known single-letter rate sum
\eqref{eq:SingleLetterRegion} of the binary doubly-dirty MAC
\eqref{eq:BinaryModel} with input constraint $q_1$ and $q_2$ is
upper bounded by
\begin{align}
R_1+R_2\leq \overline{F}_{max}(q_1,q_2).\label{eq:SumRateBound}
\end{align}
\end{lemma}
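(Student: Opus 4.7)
The plan is to reduce the generic single-letter sum rate to the four-variable form $H(V_1) + H(V_2) - H(V'_1 \oplus V'_2) - 1$ by an explicit construction of $(V_1, V'_1, V_2, V'_2)$ from the underlying triples $(U_i, X_i, S_i)$, and then to account for time-sharing at the end. First I would exploit the product structure $P(U_1, X_1|S_1)\, P(U_2, X_2|S_2)$ together with $S_1 \perp S_2$ to conclude that the pairs $(U_1, X_1)$ and $(U_2, X_2)$ are jointly independent, so in particular $U_1 \perp U_2$. This rewrites the single-letter sum rate as
\[
I(U_1, U_2; Y) - I(U_1; S_1) - I(U_2; S_2) = H(U_1|S_1) + H(U_2|S_2) - H(U_1, U_2|Y).
\]

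For a fixed input distribution I would then take binary auxiliaries $V_i, V'_i$ with $V_i \oplus V'_i = X_i$; the natural candidate, which matches equality for the canonical direct-part choice $U_i = X_i \oplus S_i$, is $V_i = U_i \oplus S_i$ and $V'_i = U_i \oplus X_i \oplus S_i$, so that $V'_1 \oplus V'_2 = U_1 \oplus U_2 \oplus Y$. Both the Hamming constraint $P(V_i \neq V'_i) = EX_i \leq q_i$ and the independence $(V_1, V'_1) \perp (V_2, V'_2)$ follow immediately from the per-user structure. The core step is then the entropy inequality
\[
H(U_1|S_1) + H(U_2|S_2) - H(U_1, U_2|Y) \leq H(V_1) + H(V_2) - H(V'_1 \oplus V'_2) - 1,
\]
whose $-1$ offset is traced to $H(S_i) = 1$ and $H(Y) \leq 1$. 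The ingredients are $H(U_i \oplus S_i) \geq H(U_i|S_i)$ (which lifts the conditional entropies on the left into the unconditional $H(V_i)$ on the right) combined with a careful chain-rule decomposition of $H(U_1, U_2|Y)$ that exploits the additive structure $Y \oplus S_1 \oplus S_2 = X_1 \oplus X_2$ and relates the conditional entropy of the pair to $H(V'_1 \oplus V'_2, Y)$. Once this is in place, $R_1 + R_2 \leq F(P_{V_1, V'_1}, P_{V_2, V'_2}) \leq F_{max}(EX_1, EX_2) \leq F_{max}(q_1, q_2)$ for every single input distribution, and the closure of the convex hull defining $\mathcal{R}_{BSL}$ is absorbed by time-sharing: a general rate pair is a convex combination of single-distribution ones with weights $(q_1^{(k)}, q_2^{(k)})$ averaging to at most $(q_1, q_2)$, so the same convex combination of the bounds $F_{max}(q_1^{(k)}, q_2^{(k)})$ yields $R_1 + R_2 \leq \overline{F}_{max}(q_1, q_2)$ by the very definition of the upper convex envelope.

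The main technical obstacle is the entropy inequality in the second paragraph. A naive combination of $H(U_1, U_2|Y) \geq H(U_1 \oplus U_2 \oplus Y) - 1$ with $H(U_i \oplus S_i) \geq H(U_i|S_i)$ yields a bound that is loose by up to one bit in the worst case, and equality holds in the canonical case, which leaves almost no slack. I expect the sharp argument to proceed in the spirit of \cite{CohenZamir07, ErezShamaiZamir05}, tightening the $-1$ offset by exploiting the uniformity of $S_1 \oplus S_2$ and handling the cases $H(Y) = 1$ and $H(Y) < 1$ separately---or, equivalently, by introducing a time-sharing variable already at the level of the auxiliaries before invoking the entropy bound. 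This is where I expect the bulk of the formal proof to sit.
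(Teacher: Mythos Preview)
Your proposal has the right target and correctly handles the convexification at the end, but the construction $V_i = U_i \oplus S_i$, $V'_i = U_i \oplus X_i \oplus S_i$ is the wrong route and creates exactly the difficulty you flag. There are two problems. First, the auxiliaries $U_i$ in $\mathcal{R}_{BSL}$ need not be binary (only their alphabet size is bounded), so $U_i \oplus S_i$ is not even defined in general. Second, even for binary $U_i$, the ``core entropy inequality'' you need does not follow from $H(U_i \oplus S_i) \geq H(U_i|S_i)$ combined with a chain-rule bound on $H(U_1,U_2|Y)$; as you note, that combination is loose by up to a bit, and no case split on $H(Y)$ repairs it.

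The paper avoids both issues by reversing the roles of $U$ and $S$. After the same preliminary step ($H(S_i)=1$, $H(Y)\leq 1$) one has
\[
R_1+R_2 \;\leq\; \Big[H(S_1|U_1)+H(S_2|U_2)-H(Y|U_1,U_2)-1\Big]^+ .
\]
Now \emph{condition on} $(U_1,U_2)=(u_1,u_2)$ and, for each fixed $(u_1,u_2)$, take $V_i$ to be $S_i$ and $V'_i$ to be $S_i\oplus X_i$ under the conditional law. Since $Y=(S_1\oplus X_1)\oplus(S_2\oplus X_2)$, the bracketed expression becomes \emph{exactly} $F\big(P_{S_1,\,S_1\oplus X_1\mid u_1},\,P_{S_2,\,S_2\oplus X_2\mid u_2}\big)$, with constraint $P(V_i\neq V'_i)=P(X_i=1\mid U_i=u_i)=:q_{i|u_i}$; no entropy inequality beyond $H(Y)\leq 1$ is required, and this works for arbitrary alphabets of $U_i$. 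Then $[E\{\cdot\}]^+\leq E\{[\cdot]^+\}$, the pointwise bound $F\leq \overline{F}_{max}(q_{1|u_1},q_{2|u_2})$, and Jensen's inequality for the concave $\overline{F}_{max}$ together with $E_{U_i}q_{i|u_i}=EX_i\leq q_i$ finish the proof. In short, the ``time-sharing variable at the level of the auxiliaries'' you anticipate is $(U_1,U_2)$ itself, and the right $(V_i,V'_i)$ are the conditional versions of $(S_i,\,S_i\oplus X_i)$ rather than XOR-shifts by $U_i$.
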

\begin{proof}
An outer bound on the best known single-letter region
\eqref{eq:SingleLetterRegion} is given by
\begin{align}
R_{BSL}^{sum}(U_1,U_2)&\triangleq\Big[I(U_1,U_2;Y)-I(U_1,U_2;S_1,S_2)\Big]^+\label{eq:400}\\
&=\Big[H(S_1|U_1)+H(S_2|U_2)-H(Y|U_1,U_2)+H(Y)-H(S_1)-H(S_2)\Big]^+\label{eq:410}\\
&\leq \Big[H(S_1|U_1)+H(S_2|U_2)-H(Y|U_1,U_2)-1\Big]^+\label{eq:420}\\
&=\Bigg[E_{U_1,U_2}\Big\{H(S_1|U_1=u_1)+H(S_2|U_2=u_2)-H(Y|U_1=u_1,U_2=u_2)-1\Big\}\Bigg]^+\label{eq:430}\\
&\leq E_{U_1,U_2}\Bigg\{\Big[H(S_1|U_1=u_1)+H(S_2|U_2=u_2)-H(Y|U_1=u_1,U_2=u_2)-1\Big]^+\Bigg\}\label{eq:435}\\
&\leq E_{U_1,U_2}\Bigg\{F\Big(P_{S_1,S_1 \oplus
X_1|U_1=u_1},P_{S_2,S_2 \oplus
X_2|U_2=u_2}\Big)\Bigg\}\label{eq:440}\\
&\leq E_{U_1,U_2}\Big\{\overline{F}_{max}\big(q_{1|u_1},q_{2|u_2}\big)\Big\}\label{eq:450}\\
&\leq \overline{F}_{max}\big(E_{U_1}q_{1|u_1},E_{U_2}q_{2|u_2}\big)\label{eq:460}\\
&\leq \overline{F}_{max}\big(q_1,q_2\big),\label{eq:470}
\end{align}
where \eqref{eq:420} follows since $H(S_1)=H(S_2)=1$ and $H(Y)\leq
1$; \eqref{eq:430} follows from the definition of the conditional
entropy; \eqref{eq:435} follows since $[Ex]^+\leq E\{x^+\}$;
\eqref{eq:440} follows from the definition of the function
$F(P_{V_1,V'_1},P_{V_2,V'_2})$ \eqref{eq:FDefinition}, likewise
\eqref{eq:450} follows from the definition of the function
$\overline{F}_{max}(q_1,q_2)$ \eqref{eq:FuceDefinition}, and from
the definition
\begin{align*}
q_{i|u_i}\triangleq P(S_i \neq X_i \oplus
S_i|U_i=u_i)=P(X_i=1|U_i=u_i),\;for\;i=1,2;
\end{align*}
\eqref{eq:460} follows from Jensen's inequality since
$\overline{F}_{max}(q_1,q_2)$ is a concave function;
\eqref{eq:470} follows from the input constraints where
\begin{align}
EX_i&=E_{U_i}P(X_i = 1 | U_i=u_i)\nonumber\\
&=\sum_{u_i\in U_i}P(u_i)P(X_i=1|U_i=u_i)\nonumber\\
&=\sum_{u_i\in U_i}P(u_i)q_{i|u_i}\leq
q_i,\;\mbox{for}\;\;i=1,2.\label{eq:InputConst}
\end{align}
The lemma now follows since the upper bound \eqref{eq:470} for the
rate sum is independent of $U_1$ and $U_2$, hence it also bounds
the single-letter region $\mathcal{R}_{BSL}(q)$.
\end{proof}

A simplified expression for the function $F_{max}(q_1,q_2)$ of
\eqref{eq:FmaxDefinition} is shown in the following lemma.
\begin{lemma}\label{lem:EquivalentBSL1}
The function $F_{max}(q_1,q_2)$ \eqref{eq:FmaxDefinition} is given
by
\begin{align}
&F_{max}(q_1,q_2) =
\max_{\alpha_1,\alpha_2\in[0,1/2]}\Big[H_b(\alpha_1)+H_b(\alpha_2)-H_b\Big([\alpha_1-q_1]^+
\ast [\alpha_2-q_2]^+\Big)-1\Big]^+,\label{eq:SumRateBound1}
\end{align}
where $\ast$ is the binary convolution, i.e., $x \ast y \triangleq
(1-x)y+(1-y)x$.
\end{lemma}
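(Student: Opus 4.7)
The plan is to reduce the four-dimensional optimization over the joint distributions of $(V_1,V_1')$ and $(V_2,V_2')$ to a two-dimensional optimization over the scalar parameters $\alpha_1, \alpha_2 \in [0,1/2]$ appearing in the claimed formula. The starting observation is that the problem is invariant under the sign flip $V_i \leftrightarrow 1 \oplus V_i$ (applied simultaneously to $V_i$ and $V_i'$): such a flip preserves $H(V_i)$, the constraint $P(V_i \neq V_i') \leq q_i$, and flipping both coordinates leaves $V_1' \oplus V_2'$ unchanged while flipping only one sends $V_1' \oplus V_2' \to 1 \oplus V_1' \oplus V_2'$, which has the same entropy. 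Hence we may assume $\alpha_i \defn P(V_i = 1) \in [0, 1/2]$, so that $H(V_i) = H_b(\alpha_i)$.

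With $\alpha_1, \alpha_2$ fixed, maximizing $F$ is equivalent to minimizing $H(V_1' \oplus V_2')$. Since $(V_1,V_1') \perp (V_2,V_2')$, the marginals $V_1'$ and $V_2'$ are independent, and letting $p_i' \defn P(V_i' = 1)$ gives $V_1' \oplus V_2' \sim \mbox{Bernoulli}(p_1' \ast p_2')$, so $H(V_1' \oplus V_2') = H_b(p_1' \ast p_2')$. The next step is a short combinatorial calculation on the $2\times 2$ joint law $(V_i,V_i')$: writing $p_{kl} = P(V_i = k, V_i' = l)$ and using the marginal $p_{10} + p_{11} = \alpha_i$ together with the Hamming constraint $p_{01} + p_{10} \leq q_i$, one minimizes $p_i' = p_{01} + p_{11}$ by choosing $p_{01} = 0$ and $p_{10} = \min(q_i, \alpha_i)$, yielding $p_i' \geq [\alpha_i - q_i]^+$; this bound is achievable. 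An analogous argument gives $p_i' \leq \min(1, \alpha_i + q_i)$.

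Finally, I would use the identities $H_b(x) = H_b(1-x)$ and $(1-x) \ast y = 1 - x \ast y$ to argue that only $p_i' \in [0,1/2]$ need be considered: flipping $p_i' \mapsto 1 - p_i'$ preserves $H_b(p_1' \ast p_2')$ while keeping the range of accessible values. On $[0,1/2]^2$ the map $(p_1', p_2') \mapsto p_1' \ast p_2'$ stays in $[0,1/2]$ and is coordinatewise monotone nondecreasing, and $H_b$ is monotone nondecreasing on $[0,1/2]$, so the minimum is attained at the pointwise lower endpoints $p_i' = [\alpha_i - q_i]^+$. Substituting this minimum back into \eqref{eq:FDefinition} and taking $[\cdot]^+$ yields the stated formula for $F_{max}(q_1, q_2)$.

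The only delicate point, and the main place to be careful, is the reduction-by-symmetry step: one must verify that the sign-flip symmetries for $(V_i, V_i')$ and the reduction of $p_i'$ to $[0,1/2]$ are consistent and truly leave the objective $H_b(p_1' \ast p_2')$ invariant. Once that bookkeeping is done, the rest is an elementary two-variable optimization with the explicit minimizer computed above.
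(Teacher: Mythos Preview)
Your proof is correct and follows essentially the same route as the paper: both fix $\alpha_i = P(V_i=1)\in[0,1/2]$ by symmetry, then minimize $P(V_i'=1)$ subject to the Hamming constraint to obtain $[\alpha_i-q_i]^+$ (the paper phrases this as the optimal transition $V_i\to V_i'$ being a Z-channel, using the conditional-probability parametrization $(\delta_i,\gamma_i)$ instead of your joint-law entries $p_{kl}$). Your handling of the symmetry reductions---in particular verifying that the flip $p_i'\mapsto 1-p_i'$ keeps $p_i'$ inside the achievable interval $[[\alpha_i-q_i]^+,\min(\alpha_i+q_i,1)]$---is, if anything, slightly more explicit than the paper's corresponding step.
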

\begin{proof}
The function $F_{max}(q_1,q_2)$ is defined in
\eqref{eq:FDefinition} and \eqref{eq:FmaxDefinition} where
$V_1,V'_1,V_2,V'_2$ are binary random variables. Let us define the
following probabilities:
\begin{align*}
\alpha_i&\triangleq P(V_i=1)\\
\delta_i&\triangleq P(V'_i=1|V_i=0)\\
\gamma_i&\triangleq P(V'_i=0|V_i=1),
\end{align*}
for $i=1,2$. We thus have
\begin{align*}
P(V'_i=1)&=(1-\alpha_i)\delta_i+\alpha_i(1-\gamma_i)\triangleq g(\alpha_i,\delta_i\gamma_i)\\
P(V_i \neq V'_i)&=\alpha_i\gamma_i+(1-\alpha_i)\delta_i\triangleq
h(\alpha_i,\delta_i,\gamma_i),
\end{align*}
for $i=1,2$. The maximization \eqref{eq:FmaxDefinition} can be
written as
\begin{align}\label{eq:MaxMinDef}
&F_{max}(q_1,q_2) =
\max_{\alpha_1,\alpha_2}\Big[H_b(\alpha_1)+H_b(\alpha_2)-\min_{\stackrel{\scriptstyle
\gamma_1,\delta_1,\gamma_2,\delta_2}{h(\alpha_i,\delta_i,\gamma_i)\leq
q_i,\;i=1,2}}H_b\Big(g(\alpha_1,\delta_1,\gamma_1) \ast
g(\alpha_2,\delta_2,\gamma_2)\Big)-1\Big]^+.
\end{align}
This maximization has two equivalent solutions
$(\alpha_1^o,\alpha_2^o)$ and $(1-\alpha_1^o,1-\alpha_2^o)$ where
$0\leq \alpha_1^o,\alpha_2^o \leq 0.5$, since any other
$(\alpha_1,\alpha_2)$ can only increase the inner minimization in
\eqref{eq:MaxMinDef} which results in a lower $F_{max}(q_1,q_2)$.
Therefore, without loss of generality we may assume that
$0\leq\alpha_1,\alpha_2\leq 0.5$.

To prove the lemma we need to show that for any $\alpha_i$ the
inner minimization is achieved by
\begin{align*}
\delta_i=0,\gamma_i=\min\{1,q_i/\alpha_i\},\;i=1,2.
\end{align*}
In other words, $V'_i$ has the smallest possible probability for
$1$ under the constraint that $P(V_i\neq V'_i)\leq q_i$, implying
that the transition from $V_i$ to $V'_i$ is a ``Z channel''. The
inner minimization requires that $P(V'_i=1)$ will be minimized
restricted to the constraint $P(V_i \neq V'_i)\leq q_i$, therefore
it is equivalent to the following minimization
\begin{align*}
\min_{\stackrel{\scriptstyle
\gamma_i,\delta_i}{h(\alpha_i,\delta_i\gamma_i)\leq
q_i}}g(\alpha_i,\delta_i\gamma_i),\;i=1,2.
\end{align*}
For $\alpha_i\leq q$, the solution is $\delta_i=0$ and
$\gamma_i=1$ since in this case $g(\alpha_i,\gamma_i,\delta_i)=0$
and the constraint is satisfied. For $q\leq\alpha_i\leq 0.5$, in
order to minimize $g(\alpha_i,\gamma_i,\delta_i)$, it is required
that $\delta_i\in[0,q/(1-a_i)]$ will be minimal and
$\gamma_i\in[0,q/\alpha_i]$ will be maximal such that the
constraint is satisfied. Clearly, the best choice is for
$\delta_i=0$ and $\gamma_i=q/\alpha_i$, in this case the
constraint is satisfies and
$g(\alpha_i,\gamma_i,\delta_i)=\alpha_i-q$.
\end{proof}

The next lemma gives an explicit upper bound for
$F_{max}(q_1,q_2)$ \eqref{eq:FmaxDefinition} for the case that
$q_1=q_2$. Let
\begin{align}\label{eq:fxDefinition}
f(x)= x-\frac{1}{1+\Big(\frac{1}{x}-1\Big)^2},
\end{align}
and let
\begin{align}\label{eq:qcDefinition}
q_c\triangleq\max_{x\in[0,1/2]}f(x).
\end{align}
Since $f(x)$ is differentiable, we can characterize $q_c$ by
differentiating $f(x)$ with respect to $x$ and equating to zero,
thus we get that
\begin{align*}
4x^4-8x^3+10x^2-6x+1=0.
\end{align*}
This fourth order polynomial has two complex roots and two real
roots, where one of its real roots is a local minimum and the
other root is a local maximum. Specifically, this local maximum
maximizes $f(x)$ for the interval $x\in[0,1/2]$ and it achieves
$q_c\simeq 0.1501$ which occurs at $x\simeq 0.257$.
\begin{lemma}\label{lem:OuterBound}
For $q_1=q_2=q$, we have that:
\begin{align}\label{eq:FmaxOuterBound}
\begin{array}{ll}
  F_{max}(q,q)= 2H_b(q)-1, & q_c \leq q \leq 1/2\\
  F_{max}(q,q) \leq C^*q, & 0 < q < q_c \\
  F_{max}(0,0) = 0, & q=0, \\
\end{array}
\end{align}
where $q_c$ is defined in \eqref{eq:qcDefinition}, while
$C^*=\frac{2H_b(q^*)-1}{q^*}$ and $q^* \triangleq
1-1/\sqrt{2}\simeq 0.3$ are defined in \eqref{eq:SumCapacityBSL}.
\end{lemma}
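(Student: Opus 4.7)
By \lemref{EquivalentBSL1}, the task reduces to bounding $G(\alpha_1,\alpha_2;q)\triangleq H_b(\alpha_1)+H_b(\alpha_2)-H_b([\alpha_1-q]^+\ast[\alpha_2-q]^+)-1$ uniformly over $(\alpha_1,\alpha_2)\in[0,1/2]^2$. I would partition this square into three regions according to how each $\alpha_i$ compares with $q$ and treat them separately. In the region with both $\alpha_i\le q$ the positive-part arguments vanish and $G=H_b(\alpha_1)+H_b(\alpha_2)-1$; monotonicity of $H_b$ on $[0,1/2]$ forces $G\le 2H_b(q)-1$, with equality at $\alpha_1=\alpha_2=q$. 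This already supplies the matching lower bound needed for the equality claim in the range $q\ge q_c$, since there $2H_b(q)-1$ is non-negative and the outer $[\cdot]^+$ in the definition of $F_{\max}$ is inactive. In the mixed region, say $\alpha_1\le q<\alpha_2$, the optimum in $\alpha_1$ is again $\alpha_1=q$, leaving $H_b(q)+H_b(\alpha_2)-H_b(\alpha_2-q)-1$; since $H_b'$ is strictly decreasing on $(0,1/2)$, the difference $H_b(\alpha_2)-H_b(\alpha_2-q)$ is non-increasing in $\alpha_2$ on $(q,1/2]$ and attains its supremum as $\alpha_2\downarrow q$, again yielding the bound $2H_b(q)-1$.

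The remaining \emph{interior} region $\alpha_1,\alpha_2>q$ is the heart of the argument. Here I would first establish a symmetrization step: setting $\bar\alpha=(\alpha_1+\alpha_2)/2$, concavity of $H_b$ gives $H_b(\alpha_1)+H_b(\alpha_2)\le 2H_b(\bar\alpha)$, while the identity $x\ast y=x+y-2xy$ combined with AM--GM gives $(\alpha_1-q)\ast(\alpha_2-q)\ge (\bar\alpha-q)\ast(\bar\alpha-q)$, so monotonicity of $H_b$ on $[0,1/2]$ implies $G(\alpha_1,\alpha_2)\le G(\bar\alpha,\bar\alpha)$. Thus the maximum is attained on the diagonal $\alpha_1=\alpha_2=\alpha$, and the problem reduces to bounding the univariate function $\phi(\alpha;q)\triangleq 2H_b(\alpha)-H_b(2(\alpha-q)(1-\alpha+q))-1$ over $\alpha\in(q,1/2]$.

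The stationary-point equation $\partial\phi/\partial\alpha=0$ reads $H_b'(\alpha)=(1-2(\alpha-q))\,H_b'(2(\alpha-q)(1-\alpha+q))$, and the role of the technical \lemref{gr} and \lemref{f2} is to convert this transcendental condition into the explicit algebraic form $q=f(x)$, with $f$ as in \eqref{eq:fxDefinition} and $x\in[0,1/2]$ a parameter derived from $\alpha$. Since $q_c=\max_{x\in[0,1/2]}f(x)$, no such interior critical point exists whenever $q>q_c$, so in that range $\phi$ is dominated by its boundary value $\phi(q^+;q)=2H_b(q)-1$. For $0<q<q_c$, interior critical points do appear, but the resulting parametric critical-value curve lies below the tangent line $C^*q$, which by construction ($q^*=1-1/\sqrt{2}$, $C^*=2H_b'(q^*)$) is the supporting tangent of $2H_b(q)-1$ at $q=q^*$. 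Finally, at $q=0$ the function $\phi$ is non-positive by Mrs.~Gerber's inequality applied to the convolution term, giving $F_{\max}(0,0)=0$.

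The main obstacle is the final step: translating the stationarity condition into the algebraic form $q=f(x)$ and verifying, for $q<q_c$, that the parametric critical-value curve is majorized by the tangent line $C^*q$. The algebraic manipulation is where \lemref{gr} and \lemref{f2} are put to use, and the majorization relies on the tangency identity $H_b(q^*)-q^*H_b'(q^*)=1/2$ (equivalent to $q^*=1-1/\sqrt{2}$), which ensures that $C^*q$ is tight at the single point $q=q^*$ and strictly above the critical-value curve elsewhere on $(0,q_c)$. A secondary subtlety is justifying the symmetrization when $\alpha_1-q$ and $\alpha_2-q$ are not both in $[0,1/2]$, which is handled by noting that the problem domain restricts $\alpha_i\le1/2$ and hence $\alpha_i-q\le1/2-q$, so the AM--GM step stays within the monotone branch of $H_b$.
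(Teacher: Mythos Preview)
Your treatment of the boundary and mixed regions matches the paper, and your symmetrization step in the interior region is correct: concavity of $H_b$ plus the AM--GM bound on $x\ast y=x+y-2xy$ indeed give $G(\alpha_1,\alpha_2)\le G(\bar\alpha,\bar\alpha)$, so the reduction to the diagonal is valid. This is a genuinely different reduction from the paper's. The paper never symmetrizes; instead, assuming $q\le\alpha_2\le\alpha_1\le 1/2$, it uses the cruder bounds $H_b\big((\alpha_1-q)\ast(\alpha_2-q)\big)\ge H_b(\alpha_1-q)$ and $H_b(\alpha_2)\le H_b(\alpha_1)$ to reduce to the one-variable function $f_1(\alpha_1)=H_b(\alpha_1-q)-2H_b(\alpha_1)+2H_b(q)$.

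The gap is in what follows. You assert that \lemref{gr} and \lemref{f2} convert the stationarity condition of your $\phi(\alpha;q)=2H_b(\alpha)-H_b\big(2(\alpha-q)(1-\alpha+q)\big)-1$ into the algebraic form $q=f(x)$ with $f$ as in \eqref{eq:fxDefinition}. That is not what those lemmas do. The function $f$ arises from the derivative of the paper's $f_1$, not of your $\phi$: the equation $f_1'(\alpha_1)=0$ is $H_b'(\alpha_1-q)=2H_b'(\alpha_1)$, which rearranges exactly to $q=\alpha_1-\big(1+(1/\alpha_1-1)^2\big)^{-1}=f(\alpha_1)$, and this is the content of \lemref{f2}. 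Your stationarity equation $H_b'(\alpha)=(1-2(\alpha-q))H_b'\big(2(\alpha-q)(1-\alpha+q)\big)$ is a different transcendental relation and does \emph{not} reduce to $q=f(\cdot)$ with the paper's $f$; hence your conclusion that ``no interior critical point exists whenever $q>q_c$'' does not follow from the definition of $q_c$. Likewise, for $0<q<q_c$ the paper does something quite different from a critical-value curve analysis: it fixes $(\alpha_1,\alpha_2)$ and treats $q$ as the running variable, observes that $g_l(q)=H_b\big((\alpha_1-q)\ast(\alpha_2-q)\big)$ is concave in $q$ while $g_r(q)=H_b(\alpha_1)+H_b(\alpha_2)-1-C^*q$ is linear, and checks only the two endpoints $q=0$ and $q=\alpha_2$; \lemref{gr} is used solely to bound $g_r(\alpha_2)$ at the second endpoint. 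Your sketch (``the parametric critical-value curve lies below the tangent line $C^*q$'') is a different claim that you have not established, and the tangency identity you cite does not by itself give the needed majorization on $(0,q_c)$.

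In short: the symmetrization is a clean and legitimate alternative reduction, but after that point you are relying on lemmas that were tailored to a different one-variable problem. To complete your route you would need a fresh analysis of $\phi(\alpha;q)$ on $(q,1/2]$ --- either new monotonicity/concavity arguments for $\phi$ itself, or an analogue of the paper's concavity-in-$q$ endpoint trick adapted to the diagonal --- rather than invoking \lemref{f2} and \lemref{gr} as stated.
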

Note that in the first case ($q_c \leq q \leq 1/2$) in
\eqref{eq:SumRateBound1} is achieved by $\alpha_1=\alpha_2=q$,
while in the third case ($q=0$) \eqref{eq:SumRateBound1} is
achieved by $\alpha_1=\alpha_2=1/2$ as shown in
\figref{AppendixFig3}. Although, we do not have an explicit
expression for $F_{max}(q,q)$ in the range $0 < q < q_c$, the
bound $F_{max}(q,q) \leq C^*q$ is sufficient for the purpose of
proving \thrmref{BSLRate} because $q_c\leq q^*$. In
\figref{AppendixFig2} a numerical characterization of
$F_{max}(q,q)$ is plotted.

\begin{figure}[h]
\begin{center}
\epsfig{file=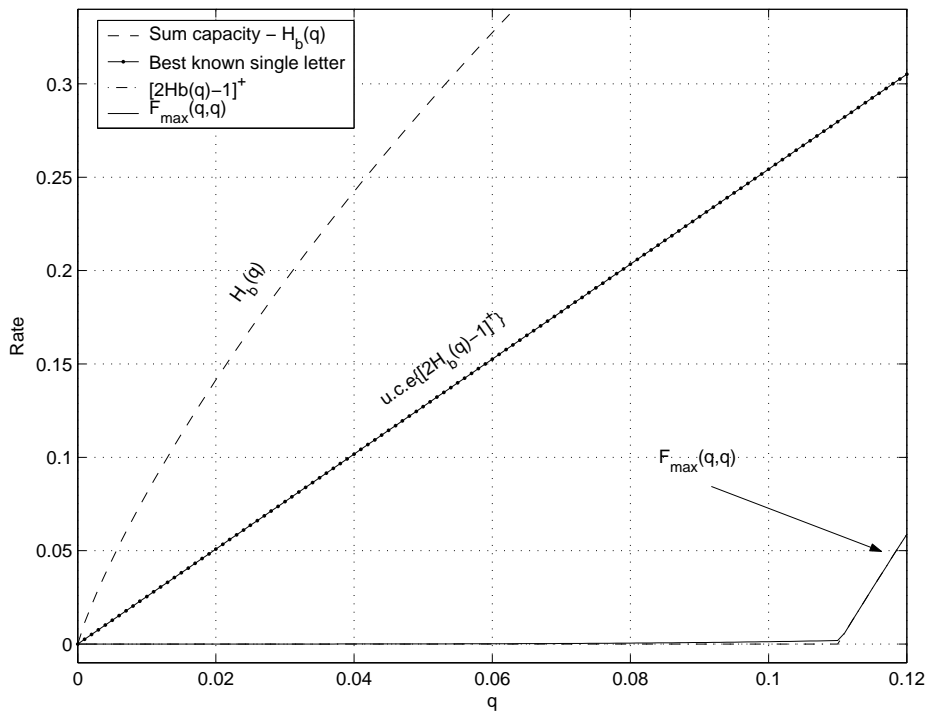,scale=0.7} \caption{Numerical
results of $F_{max}(q,q)$ \eqref{eq:SumRateBound1} for
$q\in[0,0.12]$ (\figref{BinaryDMAC} is the same plot for
$q\in[0,0.5]$) .}\label{fig:AppendixFig2}
\end{center}
\end{figure}

\begin{figure}[h]
\begin{center}
\epsfig{file=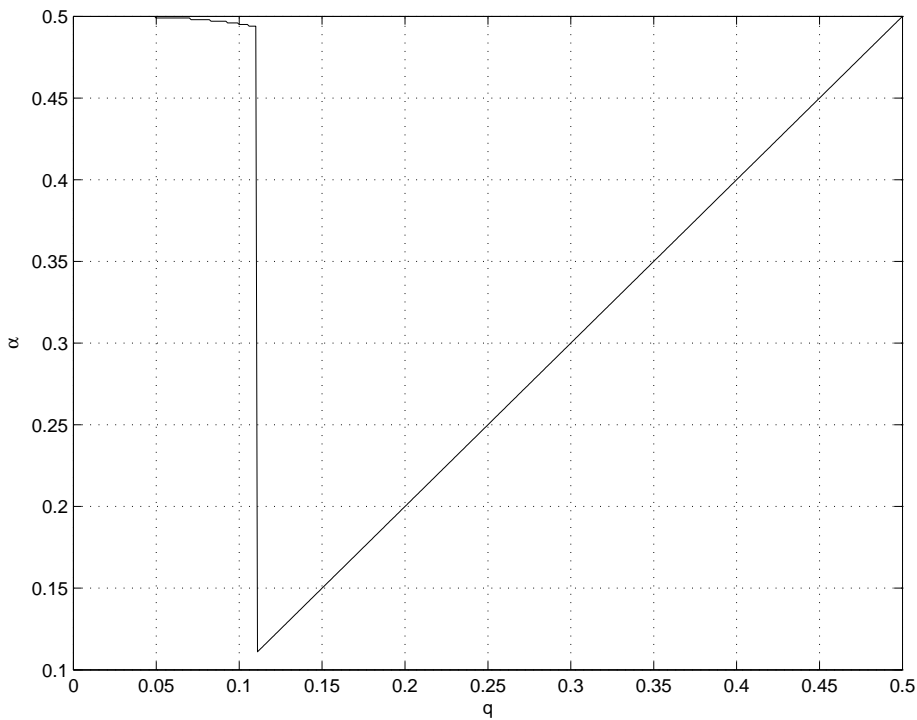,scale=0.7} \caption{The optimal
$\alpha_1=\alpha_2=\alpha(q)$ which maximizes
\eqref{eq:SumRateBound1}.}\label{fig:AppendixFig3}
\end{center}
\end{figure}

\begin{proof}
Define
\begin{align*}
F(\alpha_1,\alpha_2,q)\triangleq
H_b(\alpha_1)+H_b(\alpha_2)-H_b\big([\alpha_1-q]^+ \ast
[\alpha_2-q]^+\big)-1.
\end{align*}
From the discussion above about the cases of equality in
\eqref{eq:FmaxOuterBound}, \lemref{OuterBound} will follow by
showing that $F(\alpha_1,\alpha_2,q)$ is otherwise smaller, i.e.,
\begin{align}\label{eq:750}
F(\alpha_1,\alpha_2,q)\leq \Bigg\{
\begin{array}{cc}
  C^*q, & 0\leq q \leq q_c \\
  2H_b(q)-1, & q_c \leq q \leq 1/2\\
\end{array}
\end{align}
for all $0\leq \alpha_1,\alpha_2\leq 1/2$. It is easy to see that
for $\alpha_1,\alpha_2 \leq q$ the function
$F(\alpha_1,\alpha_2,q)$ is monotonically increasing with
$\alpha_1,\alpha_2$, and thus $F(\alpha_1,\alpha_2,q)\leq
F(q,q,q)=2H_b(q)-1$. For $\alpha_1 \leq q$ and $q < \alpha_2 \leq
1/2$, $F(\alpha_1,\alpha_2,q)$ is increasing with $\alpha_1$ and
decreasing with $\alpha_2$, and thus $F(\alpha_1,\alpha_2,q)\leq
F(q,q,q)=2H_b(q)-1$. Clearly, from symmetry, also for $\alpha_2
\leq q$ and $q \leq \alpha_1 \leq 1/2$,
$F(\alpha_1,\alpha_2,q)\leq 2H_b(q)-1$. As a consequence, we have
to show that \eqref{eq:750} is satisfied only for $q \leq
\alpha_1,\alpha_2 \leq 1/2$. Likewise, in the sequel we may assume
without loss of generality that $q \leq \alpha_2 \leq \alpha_1
\leq 1/2$.

\textbf{\textit{The bound for the interval $q_c < q \leq 1/2$}}:
in this case \eqref{eq:750} is equivalent to the following bound
\begin{align}\label{eq:800}
H_b\Big((\alpha_1-q) \ast (\alpha_2-q)\Big) - H_b(\alpha_1) -
H_b(\alpha_2) + 2H_b(q) \geq 0,\;\mbox{for}\;q_c \leq q \leq
\alpha_2 \leq \alpha_1\leq 1/2.
\end{align}
The LHS is lower bounded by
\begin{align}
&H_b\Big((\alpha_1-q) \ast (\alpha_2-q)\Big) - H_b(\alpha_1) - H_b(\alpha_2) + 2H_b(q)\nonumber\\
& \geq H_b(\alpha_1-q) - H_b(\alpha_1) - H_b(\alpha_2) + 2H_b(q)\label{eq:700}\\
& \geq H_b(\alpha_1-q) - 2H_b(\alpha_1) + 2H_b(q)\label{eq:710}\\
& \geq 0,\label{eq:715}
\end{align}
where \eqref{eq:700} follows since $H_b\Big((\alpha_1-q) \ast
(\alpha_2-q)\Big) \geq H_b(\alpha_1-q)$; \eqref{eq:710} follows
since $\alpha_2 \leq \alpha_1 \leq 1/2$; \eqref{eq:715} follows
from \lemref{f2} below.

\textbf{\textit{The bound for the interval $0 \leq q \leq q_c$}}:
in this case \eqref{eq:750} is equivalent to the following bound
\begin{align}
H_b\Big((\alpha_1-q) \ast (\alpha_2-q)\Big) \geq
H_b(\alpha_1)+H_b(\alpha_2)-1-C^*\cdot q,\;for\;0 \leq q \leq
\alpha_2 \leq \alpha_1 \leq q_c.\label{eq:LinearBound}
\end{align}
For fixed $\alpha_1$ and $\alpha_2$, let us denote the RHS and the
LHS of \eqref{eq:LinearBound} as
\begin{align*}
g_l(q) &\triangleq H_b\Big((\alpha_1-q) \ast (\alpha_2-q)\Big)\\
g_r(q) &\triangleq H_b(\alpha_1)+H_b(\alpha_2)-1-C^*\cdot q.
\end{align*}
The function $g_l(q)$ is convex-$\cap$ in $q$, since it is a
composition of the function $H_b(x)$ which is non-decreasing
convex-$\cap$ in the range $[0,1/2]$ and the function
$[\alpha_1-q] \ast [\alpha_2-q]$ which is convex-$\cap$ in $q$
\cite{BoydBook}. Since $g_r(q)$ is linear function in $q$ and
$g_l(q)$ is convex-$\cap$ function in $q$, the bound
\eqref{eq:LinearBound} is satisfied if the interval edges ($q=0$
and $q=\alpha_2$) satisfy this bound. For $q=0$,
\eqref{eq:LinearBound} holds since
\begin{align*}
g_l(q=0) &= H_b(\alpha_1 \ast \alpha_2 )\\
&\geq \max\{H_b(\alpha_1),H_b(\alpha_2)\}\\
&\geq \min\{H_b(\alpha_1),H_b(\alpha_2)\}\\
&\geq H_b(\alpha_1)+H_b(\alpha_2)-1\\
&= g_r(q=0).
\end{align*}
For $q=\alpha_2$ where $0 \leq q \leq q_c$, the bound
\eqref{eq:LinearBound} is satisfied since
\begin{align}
g_r(q=\alpha_2) &=
H_b(\alpha_1)+H_b(\alpha_2)-1-C^*\cdot\alpha_2\label{eq:900}\\
&\leq H_b(\alpha_1)-H_b(q^*)+H_b(0.5q^*)-0.5\label{eq:910}\\
&\leq H_b(\alpha_1)-H_b(q_c)\label{eq:920}\\
&\leq H_b(\alpha_1)-H_b(\alpha_2)\label{eq:930}\\
&\leq H_b(\alpha_1-\alpha_2)\label{eq:940}\\
&=g_l(q=\alpha_2),
\end{align}
where \eqref{eq:910} follows from \lemref{gr} since
$\arg\max_{\alpha_2\in[0,1/2]} g_r(\alpha_2)=0.5q^*$, and since
$C^*=\frac{2H_b(q^*)-1}{q^*}$;\eqref{eq:920} follows since for
$q^*=1-1/\sqrt{2}$ and $q_c$ defined in \eqref{eq:qcDefinition},
we have
$H_b\big(1-1/\sqrt{2}\big)-H_b\big(0.5(1-1/\sqrt{2})\big)+0.5\simeq
0.68...\geq H_b(q_c)$; \eqref{eq:930} follows since $q_c\geq
\alpha_2$, thus $H_b(q_c)\geq H_b(\alpha_2)$; \eqref{eq:940}
follows since $H_b(\alpha_1)-H_b(\alpha_1-\alpha_2)$ is decreing
in $\alpha_1$, thus $H_b(\alpha_1)-H_b(\alpha_1-\alpha_2)\leq
H_b(\alpha_2)$ for $\alpha_2\leq\alpha_1\leq 1/2$. Therefore, the
bound \eqref{eq:LinearBound} follows which completes the proof.
\end{proof}

\lemref{f2} and \lemref{gr} are auxiliary lemmas used in the proof
of \lemref{OuterBound}.

\begin{lemma}\label{lem:f2}
For $q_c \leq q \leq \alpha_1 \leq 1/2$, the following inequality
is satisfied
\begin{align}
f_1(\alpha_1)\triangleq H_b(\alpha_1-q) - 2H_b(\alpha_1) + 2H_b(q)
\geq 0.
\end{align}
\end{lemma}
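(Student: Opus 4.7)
The plan is to separate the problem into showing the boundary value $f_1(q)=0$ and proving that $f_1$ is non-decreasing on $[q,1/2]$. The first is immediate from $H_b(0)=0$: $f_1(q) = H_b(0) - 2H_b(q) + 2H_b(q) = 0$. So the real work is monotonicity, and I would handle this by a direct analysis of the derivative $f_1'(\alpha_1)$.

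Using $H_b'(x) = \log_2((1-x)/x)$, I would write
\begin{align*}
f_1'(\alpha_1) = \log_2\frac{(1-\alpha_1+q)\,\alpha_1^2}{(\alpha_1 - q)(1 - \alpha_1)^2}.
\end{align*}
The hinge of the whole argument is to solve the equation $f_1'(\alpha_1) = 0$. Clearing the logarithm and cross-multiplying gives $\alpha_1^2(1 - \alpha_1 + q) = (\alpha_1 - q)(1 - \alpha_1)^2$; expanding and collecting the terms in $q$, this reduces to $q\bigl[\alpha_1^2+(1-\alpha_1)^2\bigr] = \alpha_1(1-\alpha_1)(1-2\alpha_1)$, i.e.,
\begin{align*}
q = \frac{\alpha_1(1-\alpha_1)(1 - 2\alpha_1)}{\alpha_1^2 + (1-\alpha_1)^2} = \alpha_1 - \frac{\alpha_1^2}{\alpha_1^2 + (1-\alpha_1)^2} = f(\alpha_1),
\end{align*}
where in the last step I use the identity $1/(1 + (1/x - 1)^2) = x^2/(x^2 + (1-x)^2)$. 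So the critical-point equation for $f_1$ coincides exactly with $q = f(\alpha_1)$, where $f$ is the function from (\ref{eq:fxDefinition}) whose maximum on $[0,1/2]$ defines $q_c$.

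Once this identification is made, the rest falls out. By the definition (\ref{eq:qcDefinition}) of $q_c$, whenever $q > q_c$ the equation $q = f(\alpha_1)$ has no solution in $[0,1/2]$, and at $q = q_c$ it has only a single tangency point; in either case $f_1'(\alpha_1)$ cannot change sign on $(q, 1/2]$. To fix that sign I would take the one-sided limit $\alpha_1 \to q^+$: the factor $(\alpha_1 - q)(1-\alpha_1)^2$ in the denominator tends to $0^+$ while the numerator remains bounded and positive, so $f_1'(\alpha_1) \to +\infty$. Hence $f_1' > 0$ on $(q,1/2]$, $f_1$ is strictly increasing on $[q,1/2]$, and combined with $f_1(q)=0$ this yields $f_1(\alpha_1) \geq 0$ throughout the stated range.

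The hard part will be the middle algebraic manipulation: recognizing that the critical-point equation for $f_1'$ simplifies to precisely $q = f(\alpha_1)$ requires the right factoring, and without this the otherwise mysterious threshold $q_c$ in the lemma's hypothesis looks unmotivated. The endpoint value $f_1(q)=0$, the one-sided blow-up of $f_1'$ as $\alpha_1 \to q^+$, and the transfer from ``no critical point'' to ``constant sign'' are all routine once the critical-point equation has been unwound into this form.
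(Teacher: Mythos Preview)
Your proposal is correct and follows essentially the same approach as the paper: show $f_1(q)=0$ and then establish $f_1'(\alpha_1)\geq 0$ by reducing it to the relation $q \geq f(\alpha_1)$ with $f$ from \eqref{eq:fxDefinition}, which holds because $q\geq q_c=\max_{[0,1/2]} f$. The only presentational difference is that the paper converts the \emph{inequality} $f_1'(\alpha_1)\geq 0$ directly into $q\geq f(\alpha_1)$ (the algebra you did for the equation works verbatim for the inequality, since $\alpha_1^2+(1-\alpha_1)^2>0$), whereas you locate the critical points $q=f(\alpha_1)$ and then fix the sign via the blow-up at $\alpha_1\to q^+$; the paper's route spares you the separate ``tangency'' discussion at $q=q_c$.
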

\begin{proof}
Since $f_1(\alpha_1=q)=0$, it is sufficient to show that
$f_1(\alpha_1)$ is non-decreasing function in $\alpha_1$, i.e.,
$\frac{d}{d\alpha_1}f_1(\alpha_1)\geq 0$ for $q_c\leq q \leq
\alpha_1 \leq 1/2$, therefore
\begin{align}
\frac{d}{d\alpha_1}f_1(\alpha_1)=\log_2\Big(\frac{1}{\alpha_1-q}-1\Big)-2\log_2\Big(\frac{1}{\alpha_1}-1\Big)
\geq 0.\label{eq:720}
\end{align}
Due to monotonicity of the log function \eqref{eq:720} is
equivalent to
\begin{align}\label{eq:fCondition}
q \geq
\alpha_1-\frac{1}{1+\Big(\frac{1}{\alpha_1}-1\Big)^2}=f(\alpha_1),
\end{align}
where $f(\cdot)$ was defined in \eqref{eq:fxDefinition}. Since by
the definition of $q_c$ \eqref{eq:qcDefinition} $f(x)\leq
q_c\;\forall x\in[0,1/2]$, it follows that $f(\alpha_1)\leq
q\;\forall\;\alpha_1$ if $q_c\leq q$, and in particular for
$q_c\leq q \leq\alpha_1$, which implies \eqref{eq:fCondition} as
desired.
\end{proof}


\begin{lemma}\label{lem:gr}
Let
\begin{align}
f_2(x)= H_b(x)-1-C^*\cdot x,
\end{align}
where $x\in[0,1/2]$, and $C^*=\frac{2H_b(q^*)-1}{q^*}$ where
$q^*=1-1/\sqrt{2}$. The maximum of $f_2(x)$ is achieved by
\begin{align}
\arg\max_{x}f_2(x)=0.5q^*=\frac{1}{2}(1-1/\sqrt{2}).
\end{align}
\end{lemma}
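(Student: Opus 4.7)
\textbf{Proof proposal for Lemma~\ref{lem:gr}.}
The plan is to treat $f_2$ as a single-variable concave optimization: since $H_b(\cdot)$ is strictly concave on $[0,1/2]$ and the term $-1 - C^* x$ is affine, $f_2$ is strictly concave, so any interior critical point is the unique global maximizer. I would therefore just differentiate and set the derivative to zero.

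Using $H_b'(x) = \log_2\!\bigl((1-x)/x\bigr)$, the stationarity condition reads
\begin{equation}
\log_2\!\left(\frac{1-x}{x}\right) = C^*,
\quad\text{i.e.,}\quad
x^* = \frac{1}{1 + 2^{C^*}}.
\end{equation}
It then remains to verify the explicit identity $x^* = q^*/2$ with $q^* = 1-1/\sqrt{2}$, which is a straightforward (if slightly tedious) algebraic check and is really where the whole content of the lemma lives.

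To carry out that check I would first expand $C^*$. Using $1-q^* = 1/\sqrt{2}$ and $\log_2(1/\sqrt 2) = -1/2$,
\begin{equation}
2H_b(q^*)-1 = -2(1-1/\sqrt{2})\log_2(1-1/\sqrt{2}) + \tfrac{1}{\sqrt{2}} - 1
= -q^*\bigl[\,2\log_2(1-1/\sqrt{2}) + 1\,\bigr],
\end{equation}
so that $C^* = -\bigl[2\log_2(1-1/\sqrt{2})+1\bigr]$ and therefore
\begin{equation}
2^{C^*} = \frac{1}{2(1-1/\sqrt{2})^2} = \frac{1}{3-2\sqrt{2}}.
\end{equation}
Substituting this into $x^* = 1/(1+2^{C^*})$, rationalizing the denominator by multiplying numerator and denominator of $(3-2\sqrt{2})/(4-2\sqrt{2})$ by $(2+\sqrt{2})$, yields $x^* = (2-\sqrt{2})/4 = (1-1/\sqrt{2})/2 = q^*/2$, which is precisely the claimed maximizer.

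The only real obstacle is bookkeeping in the surd arithmetic of the last step; conceptually, concavity of $f_2$ turns the lemma into a one-line first-order condition, and the specific value of $C^*$ was manifestly chosen so that the resulting $x^*$ is exactly $q^*/2$. Strict concavity also gives uniqueness of the maximizer, so no separate argument is needed to rule out other critical points or boundary optima ($f_2(0) = -1 < f_2(q^*/2)$ and $f_2(1/2) = -C^*/2 < f_2(q^*/2)$ can be noted in passing to confirm the interior nature of the maximum).
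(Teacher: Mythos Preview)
Your proof is correct and follows essentially the same approach as the paper: differentiate $f_2$, solve the first-order condition to get $x^* = 1/(1+2^{C^*})$, and identify this with $q^*/2$. The only difference is that you supply the global strict-concavity argument and carry out the algebraic verification of $1/(1+2^{C^*}) = q^*/2$ explicitly, whereas the paper checks the second derivative at the critical point and simply asserts the final equality.
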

\begin{proof}
By differentiating $f_2(x)$ with respect to $x$ and comparing to
zero, we get that
\begin{align}
0=\frac{d}{dx}f_2(x)=\log_2\Big(\frac{1-x}{x}\Big)-C^*,
\end{align}
thus $x^o=\frac{1}{2^{C^*}+1}$ maximizes $f_2(x)$ since the second
derivative is negative, i.e., $\frac{d^2}{x^2}f_2(x)|_{x=x^o} <
0$. The lemma is followed since $x^o=\frac{1}{2^{C^*}+1}=0.5q^*$.
\end{proof}

We are now in a position to summarize the proof of
\thrmref{BSLRate}.

\noindent \textbf{Proof of Theorem \ref{thm:BSLRate} - Converse
Part.}
The rate sum is upper bounded by
\begin{align}
R_1+R_2 &\leq u.c.e\Big\{F_{max}(q,q)\Big\}\label{eq:1100}\\
&\leq u.c.e\Bigg\{
\begin{array}{cc}
C^*\cdot q, & 0 \leq q \leq q_c \\
2H_b(q)-1, & q_c < q \leq 1/2 \\
\end{array}\Bigg\}\label{eq:1110}\\
&= u.c.e\Big\{[2H_b(q)-1]^+\Big\},\label{eq:1130}
\end{align}
where \eqref{eq:1100} follows from \lemref{EquivalentBSL};
\eqref{eq:1110} follows from \lemref{OuterBound}; and
\eqref{eq:1130} follows since \eqref{eq:1110} is equal to the
upper convex envelope of $[2Hb(q)-1]^+$.

\section{A simplified Outer Bound for the Sum Capacity in the Strong Interference Gaussian Case}\label{App:Appendix3}
\begin{lemma}\label{lem:GaussianBSLRate}
The best known single-letter sum capacity
\eqref{eq:SingleLetterRegion} of the Gaussian doubly-dirty MAC
\eqref{eq:GaussianDirtyMAC} with power constraints $P_1$, $P_2$,
and strong interferences ($Q_1,Q_2\rightarrow\infty$) is upper
bounded by
\begin{align}
&R_1+R_2\leq u.c.e
\bigg\{\sup_{V_1,V'_1,V_2,V'_2}\Big[h(V_1)+h(V_2)-h\big(V'_1+V'_2+Z\big)+h(S_1+S_2)-h(S_1)-h(S_2)\Big]^+\bigg\},
\end{align}
where $u.c.e$ is the upper convex envelope operation with respect
to $P_1$ and $P_2$, and $[x]^+ = \max(0,x)$. The supremum is over
all $V_1,V'_1,V_2,V'_2$ such that $(V_1,V'_1)$ is independent of
$(V_2,V'_2)$, and
\begin{align*}
&E\Big\{(V_i-V'_i)^2\Big\} \leq P_i,\\
&h(V_i)\leq h(S_i),
\end{align*}
for $i=1,2$.
\end{lemma}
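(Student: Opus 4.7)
The plan is to mirror the proof of \lemref{EquivalentBSL} from the binary case, with the role of the binary bound $H(Y)\le 1=H(S_1\oplus S_2)$ played in the strong-interference Gaussian limit by the asymptotic inequality $h(Y)\le h(S_1+S_2)+o(1)$. Concretely, I would first expand
\[
R_{BSL}^{sum}(U_1,U_2)=\bigl[h(Y)-h(Y|U_1,U_2)-h(S_1)-h(S_2)+h(S_1|U_1)+h(S_2|U_2)\bigr]^+
\]
using the standard decomposition of mutual information together with the independence of $(U_1,X_1,S_1)$ and $(U_2,X_2,S_2)$. Writing $Y=(S_1+S_2)+(X_1+X_2+Z)$, the Gaussian entropy upper bound $h(Y)\le\tfrac{1}{2}\log(2\pi e\,\mathrm{Var}(Y))$ combined with the fact that $\mathrm{Var}(X_1+X_2+Z)$ is uniformly bounded while $\mathrm{Var}(S_1+S_2)=Q_1+Q_2\to\infty$ yields $h(Y)\le h(S_1+S_2)+o(1)$, which plays the role of \eqref{eq:420} in the binary proof.

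Next, I would condition on $(U_1,U_2)=(u_1,u_2)$ and, for each realization, let $V_i$ and $V'_i$ denote random variables distributed as $P_{S_i|U_i=u_i}$ and $P_{S_i+X_i|U_i=u_i}$ respectively. By the product structure across users, $(V_1,V'_1)$ and $(V_2,V'_2)$ are independent, and $V'_1+V'_2+Z$ has the same law as $Y$ conditioned on $(u_1,u_2)$. Applying $[E(\cdot)]^+\le E[(\cdot)^+]$ (exactly as in step \eqref{eq:435}) brings the sum rate to an $(U_1,U_2)$-average of the expression displayed in the lemma, up to the $o(1)$ term. The two constraints are verified in average: $E[(V_i-V'_i)^2]=E[X_i^2\mid U_i=u_i]$ averages to at most $P_i$, and $h(V_i)=h(S_i|U_i=u_i)$ averages to $h(S_i|U_i)\le h(S_i)$.

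Finally, Jensen's inequality applied to the upper convex envelope of the inner supremum, which is by construction concave and monotone non-decreasing in $(P_1,P_2)$, converts the $(U_1,U_2)$-average into the single evaluation $\overline G_{max}(P_1,P_2)$, mirroring the chain \eqref{eq:460}--\eqref{eq:470} of \lemref{EquivalentBSL}. The main obstacle will be the fact that the power and entropy constraints hold only \emph{in average} over $U_i$ and not pointwise; I expect this to be resolved by appealing directly to the concavity and monotonicity of the upper convex envelope rather than to a pointwise bound on the integrand. A secondary technical point is uniform control on the $o(1)$ slack from $h(Y)\le h(S_1+S_2)+o(1)$ across all admissible input distributions; this can be handled by bounding the variance of $X_1+X_2+Z$ by $P_1+P_2+N$ plus a Cauchy--Schwarz cross-term that vanishes relative to $Q_1+Q_2$ in the strong-interference limit.
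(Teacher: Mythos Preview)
Your proposal is correct and follows the paper's proof essentially step for step: the paper likewise expands the sum rate, replaces $h(Y)$ by $h(S_1+S_2)+o(1)$ in the strong-interference limit, conditions on $(U_1,U_2)$ to identify $V_i=S_i|_{U_i=u_i}$ and $V'_i=(S_i+X_i)|_{U_i=u_i}$, pushes the positive part inside the expectation, and then applies Jensen to the concave envelope $\overline{G}_{max}$ to pass from the conditional powers $P_{i|u_i}$ to $P_i$. The obstacle you flag---that the entropy constraint $h(V_i)\le h(S_i)$ is only guaranteed in average---is handled in the paper at the pointwise step \eqref{eq:1050} with the same somewhat terse appeal to $h(S_i|U_i)\le h(S_i)$, so your caution there is well placed but does not represent a departure from the paper's argument.
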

\begin{proof}
Let us define the following functions (corresponds to
$F(P_{V_1,V'_1},P_{V_2,V'_2})$ of \eqref{eq:FDefinition})
\begin{align}
&G\big(f_{V_1,V'_1},f_{V_2,V'_2}\big)\triangleq
\Big[h(V_1)+h(V_2)-h\big(V'_1+V'_2+Z\big)+h(S_1+S_2)-h(S_1)-h(S_2)\Big]^+.\label{eq:GDefinition}
\end{align}
The second function is the following maximization of
\eqref{eq:GDefinition} with respect to $V_1,V'_1,V_2,V'_2$.
\begin{align}
G_{max}(P_1,P_2)\triangleq &\sup_{V_1,V'_1,V_2,V'_2}G\big(f_{V_1,V'_1},f_{V_2,V'_2}\big)\label{eq:GmaxDefinition}\\
&\mbox{s.t}\;\;E\Big\{(V_i-V'_i)^2\Big\}\leq P_i,\quad h(V_i)\leq
h(S_i),\quad\mbox{for}\;\;i=1,2.\nonumber
\end{align}
Finally, we define the upper convex envelope of $G_{max}(P_1,P_2)$
with respect to $P_1$ and $P_2$:
\begin{align}
&\overline{G}_{max}(P_1,P_2)\triangleq
u.c.e\Big\{G_{max}(P_1,P_2)\Big\}.\label{eq:GuceDefinition}
\end{align}

Clearly if we take only the rate sum equation in
\eqref{eq:SingleLetterRegion0} we get an outer bound on the best
known single-letter region,
\begin{align}
&R_{BSL}^{sum}(U_1,U_2)\triangleq\Big[I(U_1,U_2;Y)-I(U_1,U_2;S_1,S_2)\Big]^+\label{eq:1000}\\
&=\Big[h(S_1|U_1)+h(S_2|U_2)-h(Y|U_1,U_2)+h(Y)-h(S_1)-h(S_2)\Big]^+\label{eq:1010}\\
&\leq \Big[h(S_1|U_1)+h(S_2|U_2)-h(Y|U_1,U_2)+h(S_1+S_2)-h(S_1)-h(S_2)\Big]^++o(1)\label{eq:1020}\\
&=\Bigg[E_{U_1,U_2}\Big\{h(S_1|U_1=u_1)+h(S_2|U_2=u_2)-h(Y|U_1=u_1,U_2=u_2)+h(S_1+S_2)-h(S_1)-h(S_2)\Big\}\Bigg]^++o(1)\label{eq:1030}\\
&\leq E_{U_1,U_2}\Bigg\{\Big[h(S_1|U_1=u_1)+h(S_2|U_2=u_2)-h(X_1+S_1+X_2+S_2+Z|U_1=u_1,U_2=u_2)\nonumber\\
&\qquad\qquad+h(S_1+S_2)-h(S_1)-h(S_2)\Big]^+\Bigg\}+o(1)\label{eq:1035}\\
&= E_{U_1,U_2}\Bigg\{G\Big(f_{S_1,S_1+X_1|U_1=u_1}f_{S_2,S_2+X_2|U_2=u_2}\Big)\Bigg\}+o(1)\label{eq:1040}\\
&\leq E_{U_1,U_2}\Big\{\overline{G}_{max}\big(P_{1|u_1},P_{2|u_2}\big)\Big\}+o(1)\label{eq:1050}\\
&\leq \overline{G}_{max}\big(E_{U_1}P_{1|u_1},E_{U_2}P_{2|u_2}\big)+o(1)\label{eq:1060}\\
&\leq \overline{G}_{max}\big(P_1,P_2\big)+o(1),\label{eq:1070}
\end{align}
where \eqref{eq:1020} follows since $h(Y)\leq h(S_1+S_2)+o(1)$
where $o(1)\rightarrow 0$ as $Q_1,Q_2\rightarrow\infty$;
\eqref{eq:1030} follows from the definition of the conditional
entropy; \eqref{eq:1035} follows since $[Ex]^+\leq E\{x^+\}$ and
since $Y=X_1+S_1+X_2+S_2+Z$; \eqref{eq:1040} follows from the
definition of the function $G\big(f_{V_1,V'_1},f_{V_2,V'_2}\big)$
\eqref{eq:GDefinition}, likewise \eqref{eq:1050} follows from the
definition of the function $\overline{G}_{max}(P_1,P_2)$
\eqref{eq:GuceDefinition}, and since $h(S_i|U_i)\leq h(S_i)$ and
from the definition
\begin{align*}
P_{i|u_i}\triangleq E\Big\{X_i^2|U_i=u_i\Big\},\;for\;i=1,2;
\end{align*}
\eqref{eq:1060} follows from Jensen's inequality since
$\overline{G}_{max}(P_1,P_2)$ is a concave function;
\eqref{eq:1070} follows from the input constraints where
\begin{align}
EX_i^2&=E_{U_i}E\big\{X_i^2|U_i=u_i\big\}= E_{U_i}P_{i|u_i}\leq
P_i,\;\mbox{for}\;\;i=1,2.\label{eq:GInputConst}
\end{align}
The lemma follows since the upper bound \eqref{eq:1070} for the
rate sum is now independent of $U_1$ and $U_2$, hence it also
bound the single-letter region $\mathcal{R}_{BSL}(P_1,P_2)$.
\end{proof}

\section*{Acknowledgment}
The authors wish to thank Ashish Khisti for earlier discussions on
the binary case. The authors also would like to thank Uri Erez for
helpful comments.

\bibliographystyle{IEEEtran}
\bibliography{../../../Bibliography/mybib}
\end{document}